\documentclass[preprint,12pt]{elsarticle}

\usepackage{graphicx}%
\usepackage{multirow}%
\usepackage{amsmath,amssymb,amsfonts}%
\usepackage{amsthm}%
\usepackage{mathrsfs}%
\usepackage[title]{appendix}%
\usepackage{xcolor}%
\usepackage{textcomp}%
\usepackage{manyfoot}%
\usepackage{booktabs}%
\usepackage{listings}%
\usepackage{url}

\usepackage{tabularray}
\usepackage{overpic}
\usepackage[normalem]{ulem}

\newtheorem{theorem}{Theorem}
\newtheorem{corollary}[theorem]{Corollary}%
\newtheorem{lemma}[theorem]{Lemma}%

\usepackage{mathtools}
\DeclarePairedDelimiter\abs{\lvert}{\rvert}

\DeclarePairedDelimiter\ceil{\lceil}{\rceil}
\DeclarePairedDelimiter\floor{\lfloor}{\rfloor}
\DeclarePairedDelimiter\parenv{\lparen}{\rparen}

\DeclarePairedDelimiter\set{\{}{\}}
\DeclarePairedDelimiter\ang{\langle}{\rangle}
\DeclarePairedDelimiter\mset{\{\!\!\{}{\}\!\!\}}

\renewcommand{\leq}{\leqslant}

\renewcommand{\geq}{\geqslant}

\newcommand{\cB}{\mathcal{B}}
\newcommand{\cC}{\mathcal{C}}

\newcommand{\cP}{\mathcal{P}}

\newcommand{\cV}{\mathcal{V}}

\newcommand{\F}{\mathbb{F}}

\newcommand{\E}{\mathbb{E}}

\newcommand{\eqdef}{\triangleq}

\DeclareMathOperator{\supp}{supp}
\DeclareMathOperator{\wt}{wt}

\DeclareMathOperator{\rank}{rank}
\DeclareMathOperator{\im}{im}
\DeclareMathOperator{\tr}{tr}
\DeclareMathOperator{\PG}{PG}
\DeclareMathOperator{\Bin}{Bin}
\DeclareMathOperator{\rowsp}{rowsp}

\begin{document}

\begin{frontmatter}
\title{On the Generalized Packing and Covering Radii of Codes}

\author[1]{Wenjun Yu} 
\ead{yuwenjun@mail.ustc.edu.cn} 

\author[2]{Moshe Schwartz} 

\ead{schwartz.moshe@mcmaster.ca} 

\address[1]{Institute of Mathematics and Interdisciplinary Sciences, Xidian University, Xi'an, 710126, China} 

\address[2]{Department of Electrical and Computer Engineering, McMaster University, Hamilton, ON L8S 4K1, Canada} 

\begin{abstract}
The minimum distance and the covering radius are two fundamental properties of the code. Both have been extended: the former to the generalized Hamming weights hierarchy, and the latter to the generalized covering radii hierarchy. In both cases, the lowest level of the hierarchies corresponds to the classical minimum distance and covering radius, respectively. From a geometric point of view, the minimum distance of the code determines the packing radius, which is upper bounded by the covering radius. It was conjectured this relation extends to all other orders of the hierarchy, namely, that the generalized packing radii are upper bounded by the generalized covering radii of the same order.

In this paper we prove this conjecture is true for the second order radii. We also prove the conjecture holds for all orders when the code rate is at most $3/5$. Finally, we show that for any code rate in $(0,1)$, for all sufficiently long codes the conjecture holds for all orders.
\end{abstract}

\begin{keyword}
    Error-correcting Codes \sep Covering Codes \sep Generalized Hamming Weights \sep Generalized Covering Radii

    \MSC 11T71 \sep 94B05 \sep 94B75
\end{keyword}

\end{frontmatter}

\section{Introduction}

For a prime power $q$, let $\F_q$ denote the finite field of size
$q$. We use $\F_q^n$ to denote the set of vectors of length $n$ over
$\F_q$, and $\F_q^{t\times n}$ to denote the set of matrices of size
$t\times n$ over $\F_q$. For a matrix $A\in\F_q^{t\times n}$, we use
$\rowsp(A)$ to denote the vector space spanned by the rows of $A$. We
also use $A_{|j}$ to denote the $j$-th column of $A$.

Given a vector, $v=(v_1,\dots,v_n)\in\F_q^n$, the support of $v$ is
defined as
\[ \supp(v) \eqdef \set*{ 1\leq j\leq n : v_j\neq 0}.\]
With this, we can define the Hamming weight of a vector as
$\wt(v)\eqdef \abs{\supp(v)}$, and the Hamming distance between
$v,v'\in\F_q^n$ as $d(v,v')\eqdef \wt(v-v')$.
This is extended to matrices. Let $V\in\F_q^{t\times n}$ be some
matrix. Then
\[ \supp(V) \eqdef \set*{ 1\leq j\leq n : V_{|j}\neq 0}.\]
Naturally, $\wt(V)\eqdef \abs{\supp(V)}$, and for
$V,V'\in\F_q^{t\times n}$, we define $d(V,V')=\wt(V-V')$. This
distance measure gives rise to the block metric~\cite{FenXuHic06}.

An $[n,k]_q$ linear code $\cC\subseteq\F_q^n$ is a vector space, whose
elements are called codewords. We say $n$ is the length of the code,
$k=\dim \cC$ the dimension of the code, and $n-k$ is called the
redundancy of the code. The minimum distance of the code, denoted
$d(\cC)$, is defined as the smallest Hamming distance between two
distinct codewords, i.e.,
\begin{equation}
  \label{eq:defd}
  d(\cC) \eqdef \min_{\substack{c,c'\in\cC \\ c\neq c'}} d(c,c')=\min_{\substack{c\in\cC\\ c\neq 0}} \wt(c).
\end{equation}
If we define
\[ \delta(\cC) \eqdef \floor*{\frac{d(\cC)-1}{2}},\]
then it is known~\cite{MacSlo78} that $\delta(\cC)$ is the packing
radius of $\cC$, namely, the largest radius of balls in Hamming
metric, centered at the codewords, such that any two distinct balls
are disjoint.

The counterpart of the packing radius is the covering radius of the
code, denoted by $R(\cC)$. Geometrically, it is defined as the
smallest radius of balls, centered at the codewords, such that their
union covers the entire space $\F_q^n$. Thus,
\begin{equation}
  \label{eq:defR}
R(\cC) \eqdef \max_{v\in\F_q^n} \min_{c\in\cC}d(v,c).
\end{equation}

Both the minimum Hamming distance of the code, and the covering radius
of the code, have been generalized to hierarchy of parameters, called
the generalized Hamming weights of the code, and the generalized
covering radii of the code, respectively. To present those, we define
for all integers $t\geq 1$,
\[ \cC^t \eqdef \set*{ C\in\F_q^{t\times n} : \text{each row of $C$ is a codeword of $\cC$}}.\]

The $t$-th order generalized hamming weight of $\cC$, denoted
$d_t(\cC)$, was introduced by~\cite{Wei91}, and in our notation is defined as
\begin{equation}
  \label{eq:defdt}
d_t(\cC) \eqdef \min_{\substack{C\in\cC^t \\ \rank C=t}}\wt(C).
\end{equation}
Comparing this with~\eqref{eq:defd}, we see that $d_1(\cC)=d(\cC)$,
and thus the generalized Hamming weight hierarchy of the code,
$d_1(\cC),d_2(\cC),\dots$, generalizes the notion of the minimum
distance of the code.

Much more recently, \cite{EliFirSch21a} introduced the notion of the generalized covering radius. The $t$-th order generalized covering radius of $\cC$,
denoted $R_t(\cC)$, is defined as
\[
R_t(\cC) \eqdef \max_{V\in\F_q^{t\times n}} \min_{C\in\cC^t} d(V,C).
\]
Once again, comparing this with~\eqref{eq:defR}, we see that
$R_1(\cC)=R(\cC)$. Thus, the generalized covering radius hierarchy,
$R_1(\cC), R_2(\cC),\dots$, generalizes the notion of the covering
radius of the code.

In~\cite{EliFirSch21a}, the following quantity
\[
\delta_t(\cC) \eqdef \floor*{\frac{d_t(\cC)-1}{2}},
\]
was dubbed the $t$-order generalized packing radius of the code. It
was conjectured that $\delta_t(\cC)\leq R_t(\cC)$ in the relevant range, i.e., all $1\leq t\leq \min\set{k,n-k}$. This is obviously
true for the classical case of $t=1$ (e.g., see~\cite{CohHonLitLob97}). It is not known
whether this holds for $t\geq 2$.

The generalized Hamming weights were introduced in~\cite{Wei91} 35 years ago. Since then, hundreds of papers followed, studying the properties of this hierarchy of weights, finding the values of some known codes, bounding it for others, and constructing codes with desired hierarchies. In comparison, the more recently introduced generalized covering radius~\cite{EliFirSch21a} has much fewer known results. The generalized covering radii of Hamming, extended Hamming, and repetition codes was studied in~\cite{EliFirSch21a},  BCH codes in~\cite{YohSch25,OzbOzt26,OzbIlk26a,EssZab26,BelZab26,XioYip26,XioYipZul26}, Melas codes~\cite{LiXio26}, other cyclic codes~\cite{LuoZhoMesSagYan26}, and Reed-Muller codes~\cite{EliWeiSch22}. The asymptotic capacity of codes with a specific generalized covering radius was studied in~\cite{EliFirSch21a,EliSch24,LiShaWei26}. We also note~\cite{AlfMarNerTro26}, which studies the generalized covering radii from a geometric perspective while still noting that the conjecture of whether $\delta_t(\cC)\leq R_t(\cC)$ is still open. If the conjecture were to be resolved, then apart from a greater understanding of the geometry of codes, we would automatically be able to use all the extensive results on generalized Hamming weights to lower bound the generalized covering radii of codes, of which we know far less.

The main contributions of this paper are as follows. We prove that the conjecture is true for order $t=2$ for all codes. We then show the conjecture holds for all codes with rate at most $3/5$ for all orders simultaneously. Additionally, we prove an asymptotic result, showing that for any rate $0< \alpha < 1$, there exists $N_\alpha$ (that depends on the rate $\alpha$ only), such that all codes of rate $\alpha$ and length $n\geq N_\alpha$ satisfy the conjecture for all orders simultaneously. Finally, we show two more bounds on the generalized covering radius in terms of the generalized Hamming weight.

The paper is organized as follows. In Section~\ref{sec:prelim} we bring some relevant notation and known results that will be used later. In Section~\ref{sec:t2} we prove the conjecture for $t=2$. In Section~\ref{sec:rate} we prove the conjecture for codes of rate at most $3/5$, and then in Section~\ref{sec:asymptotic} we prove the asymptotic result. Finally, we prove some bounds in Section~\ref{sec:genbound}.

\section{Preliminaries}
\label{sec:prelim}

From a geometric perspective, the covering problem requires the notion of a ball. The Hamming ball of radius $r$ centered at $v\in\F_q^n$ is defined as
\[
\cB_{q,r}(v) \eqdef \set*{ v'\in\F_q^n : d(v',v) \leq r}.
\]
This notion may be extended to the block metric. Here, for $V\in\F_q^{t\times n}$ as the center of the ball,
\[
\cB_{q,r}^{(t)}(V) \eqdef \set*{ V'\in\F_q^{t\times n} : d(V',V)\leq r}.
\]
It was shown in~\cite{EliFirSch21a} that
the volume of a ball of radius $r$ in the order-$t$ block metric of
length $n$, does not depend on the choice of center, and is given by
\[
\cV_{q^t}(n,r) \eqdef \abs*{\cB_{q,r}^{(t)}(V)} = \sum_{i=0}^r \binom{n}{i}(q^t-1)^i.
\]
Additionally, it was shown there that the ball-covering argument
implies that for an $[n,k]_q$ code $\cC$, one must have
\begin{equation}
\label{eq:downball}
\cV_{q^t}(n,R_t(\cC)) \geq q^{t(n-k)}.
\end{equation}
Alternatively,
\begin{equation}
  \label{eq:nec}
  \cV_{q^t}(n,r) < q^{t(n-k)} \implies R_t(\cC) > r.
\end{equation}

The following known results about the generalized Hamming weights will be used in this paper. First, we recall the generalized Singleton bound from~\cite{Wei91}.

\begin{lemma}[{\cite[Theorem 1]{Wei91}}]
\label{lem:knownsingleton}
Let $\cC$ be and $[n,k]_q$ code. Then for all $1\leq t\leq k$,
\[
d_t(\cC) \leq n-k+t.
\]
\end{lemma}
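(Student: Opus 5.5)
The plan is to prove the generalized Singleton bound $d_t(\cC)\leq n-k+t$ by exhibiting an explicit rank-$t$ matrix $C\in\cC^t$ whose support is small. By the definition~\eqref{eq:defdt}, $d_t(\cC)$ is the minimum of $\wt(C)$ over all rank-$t$ matrices in $\cC^t$. Hence it suffices to produce one such $C$ with $\wt(C)\leq n-k+t$.

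First I will fix a generator matrix $G\in\F_q^{k\times n}$ of $\cC$. Since $\rank G=k$, there is a set $I\subseteq\set{1,\dots,n}$ of $k$ coordinates on which the columns of $G$ are linearly independent, i.e., an information set. Row-reducing (possibly with the coordinates permuted for notational convenience, which does not affect weights), I will take $G$ to have the identity $I_k$ on the columns indexed by $I$. The $i$-th row $g_i$ of $G$ then has, among the coordinates of $I$, exactly one nonzero entry, namely at the $i$-th element of $I$.

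Next I will let $C$ be the $t\times n$ matrix formed by the first $t$ rows $g_1,\dots,g_t$. Each row is a codeword, so $C\in\cC^t$, and the rows are linearly independent (they restrict to distinct unit vectors on $I$), so $\rank C=t$. For the support count: a column $j\in I$ of $C$ is nonzero only if $j$ is one of the first $t$ elements of $I$, which contributes $t$ coordinates. Columns outside $I$ number $n-k$ and contribute at most $n-k$. Hence $\wt(C)\leq n-k+t$, and therefore $d_t(\cC)\leq \wt(C)\leq n-k+t$.

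There is no real obstacle here. The only point needing care is the existence of the systematic form on an information set, which is standard linear algebra: Gaussian elimination together with the fact that the column rank of $G$ is $k$. The restriction $1\leq t\leq k$ is exactly what is needed to select $t$ rows.
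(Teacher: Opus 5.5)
Your proof is correct. As a small simplification, with $G_I$ the invertible $k\times k$ submatrix of $G$ on the columns of $I$, the matrix $G_I^{-1}G$ is again a generator matrix and has the identity on $I$, so no coordinate permutation is needed.

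The paper gives no proof of its own here; it only cites Wei. There the bound follows from the strict monotonicity $d_1(\cC)<d_2(\cC)<\dots<d_k(\cC)\leq n$. That monotonicity comes from taking an $r$-dimensional subcode $D$ whose support has size $d_r(\cC)$, picking a coordinate $i$ in that support, and passing to $\set{x\in D : x_i=0}$. This subcode has dimension $r-1$ and strictly smaller support. Iterating from $r=k$ down to $r=t$ gives $d_t(\cC)\leq d_k(\cC)-(k-t)\leq n-k+t$.

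Your construction is more direct and self-contained, needing only Gaussian elimination. The monotonicity route gives the stronger relative statement $d_t(\cC)\leq d_s(\cC)-(s-t)$ for all $t\leq s\leq k$; the Singleton bound is just the case $s=k$.

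Your particular choice of coordinates is also inessential. Take any set $J$ of $n-k+t$ coordinates and a parity-check matrix $H$. The codewords of $\cC$ supported in $J$ form the kernel of the $(n-k)\times(n-k+t)$ submatrix of $H$ on the columns in $J$, so they form a subcode of dimension at least $t$. This is the same counting the paper uses in the proofs of Lemma~\ref{lem:Rtt} and Lemma~\ref{lem:highrad}.
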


Another lemma is due to~\cite{TsfVla95}:
\begin{lemma}[{\cite[Corollary 3.7]{TsfVla95}}]
\label{lem:knownvla}
Let $\cC$ be an $[n,k]_q$ code. Then for any $0\leq \ell\leq s-r$, $1\leq r\leq s\leq k$,
\[
d_r(\cC) \leq \floor*{\frac{(q^r-1)q^{s-\ell-r}}{q^{s-\ell}-1}(d_s(\cC)-\ell)}.
\]
\end{lemma}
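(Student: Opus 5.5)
The plan is a double-counting (averaging) argument over the $r$-dimensional subspaces of a subcode that attains $d_s(\cC)$. An initial shortening step takes care of the parameter $\ell$.

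First I reduce to $\ell=0$. Let $\cC_s\subseteq\cC$ be an $s$-dimensional subcode with $\abs{\supp(\cC_s)}=d_s(\cC)$; such a subcode exists by~\eqref{eq:defdt}, since the support of a full-rank $C\in\cC^s$ is the support of $\rowsp(C)$. I will shorten $\ell$ times:
\begin{itemize}
\item pick a coordinate $j$ in the support of the current subcode $D$;
\item replace $D$ by $\set{c\in D : c_j=0}$.
\end{itemize}
Since $j\in\supp(D)$, the map $c\mapsto c_j$ is a nonzero functional on $D$. So each step lowers the dimension by exactly one and removes at least the coordinate $j$ from the support. After $\ell\leq s-r$ steps we obtain a subcode $D'$ with $\dim D'=s-\ell\geq r$ and $\abs{\supp(D')}\leq d_s(\cC)-\ell$.

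It then suffices to prove the $\ell=0$ statement for $D'$. Precisely: if $D'$ has dimension $m=s-\ell$ and support of size $w\leq d_s(\cC)-\ell$, then some $r$-dimensional subspace of $D'$ has support of size at most
\[
\frac{(q^r-1)q^{m-r}}{q^m-1}\,w .
\]
Every $r$-dimensional subspace of $D'$ is an $r$-dimensional subcode of $\cC$, so its support size is at least $d_r(\cC)$. Since $d_r(\cC)$ is an integer, the floor then follows.

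For the averaging, I will count pairs $(E,j)$, where $E$ is an $r$-dimensional subspace of $D'$ and $j\in\supp(E)$. Fix $j\in\supp(D')$. The subspace $H_j=\set{c\in D' : c_j=0}$ is a hyperplane of $D'$, and $j\notin\supp(E)$ exactly when $E\subseteq H_j$. Hence the number of $E$ with $j\in\supp(E)$ equals $\binom{m}{r}_q-\binom{m-1}{r}_q$, in Gaussian binomials. Summing over the $w$ coordinates and dividing by $\binom{m}{r}_q$, the average support size of $E$ is
\[
w\Bigl(1-\tfrac{\binom{m-1}{r}_q}{\binom{m}{r}_q}\Bigr)=w\Bigl(1-\tfrac{q^{m-r}-1}{q^m-1}\Bigr)=w\,\frac{q^{m-r}(q^r-1)}{q^m-1}.
\]
Some $E$ has support at most this average, and substituting $m=s-\ell$ gives exactly the stated bound.

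The only points needing care are the Gaussian-binomial ratio identity, which is a routine computation, and checking that the shortening really drops the dimension by exactly one at each step. The latter holds because each chosen coordinate lies in the current support, and the hypothesis $\ell\leq s-r$ guarantees the final dimension is still at least $r$. I do not expect a genuine obstacle; the main thing is to organize the $\ell$-reduction cleanly.
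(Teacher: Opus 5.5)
Your proof is correct and complete. The paper itself does not prove this lemma. It quotes it from Tsfasman and Vl\u{a}du\c{t} and uses it only as a black box, in the $t=3$, $q=2$ case of Theorem~\ref{th:rate35}, so there is no in-paper argument to compare your route against.

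Your proof splits into two known facts:
\begin{itemize}
\item The shortening step is exactly the strict monotonicity of the weight hierarchy, $d_{s-\ell}(\cC)\leq d_s(\cC)-\ell$, due to Wei.
\item The double count over $r$-dimensional subspaces of an $m$-dimensional subcode is the $\ell=0$ inequality $d_r(\cC)\leq \frac{(q^r-1)q^{m-r}}{q^m-1}\,d_m(\cC)$.
\end{itemize}
The lemma is the composition of the two with $m=s-\ell$, which is the natural way a statement of this shape arises as a corollary.

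The only detail you leave implicit is that the current subcode is nonzero before each shortening step, so that a coordinate in its support exists. This holds because its dimension at that point is at least $r+1$.
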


Finally, some miscellaneous notation. A multiset will be denoted by
double curly braces, $\mset{\dots}$. If $A$ is a multiset, we use
$\#_x(A)$ to denote the number of times $x$ appears in $A$. This might
be $0$ if $x\notin A$.

Two vectors $v,v'\in\F_q^{\ell}\setminus\set{0}$ are said to be
equivalent if $v=\alpha v'$ for some
$\alpha\in\F_q\setminus\set{0}$. The equivalence class of $v$ is
denoted by $\ang{v}$ and is called a projective point. The set of
equivalence classes is called a projective space, and denoted
$\PG(\ell-1,q)$.

For $q\geq 2$, the $q$-ary entropy function, $H_q(x):[0,1-q^{-1}]\to [0,1]$ is defined as
\[
H_q(x) \eqdef x\log_q (q-1) - x\log_q x - (1-x)\log_q (1-x),
\]
with the convention that $0\log_q 0=0$. In this range, the function is a bijection, and so the inverse function $H_q^{-1}(x): [0,1]\to [0,1-q^{-1}]$ is well defined.

\section{Order $t=2$}
\label{sec:t2}

In this section we prove the conjecture holds for order $t=2$ for all codes. The proof technique is based on inspecting the multiset of columns of a generator matrix for the code as projective points. By relating their multiplicities to $d_2(\cC)$ and $R_2(\cC)$, the bound is proved.

Before stating and proving this claim, we require two lemmas. The first is a simple technical lemma concerning the partitioning of a multiset.

\begin{lemma}
  \label{lem:partition}
  Let $S$ be a set, and let $T\subseteq S$ be a finite multiset
  containing elements from $S$, with $\abs{T}=d$. Then there exists a
  partition of $T$ into two multisets, $A$ and $B$, $A\cup B=T$, and
  $\abs{A}+\abs{B}=\abs{T}$, such that:
  \begin{itemize}
  \item
    $\abs{A} =\floor{d/2}$, $\abs{B}=\ceil{d/2}$, and
  \item
    for any $x,y\in S$, $x\neq y$, we have $\#_x(A)+\#_y(B) \leq \floor{d/2}+1$.
  \end{itemize}
\end{lemma}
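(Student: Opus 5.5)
We construct the partition greedily by sorting. List the distinct elements of $T$ as $x_1,\dots,x_m$ with multiplicities $a_1\geq a_2\geq\dots\geq a_m$, so $\sum_i a_i=d$. Write the elements of $T$ as a sequence of length $d$: first all copies of $x_1$, then all copies of $x_2$, and so on. We then distribute this sequence alternately, placing positions $1,3,5,\dots$ into $B$ and positions $2,4,6,\dots$ into $A$. This gives $\abs{B}=\ceil{d/2}$ and $\abs{A}=\floor{d/2}$. Since each element occupies a contiguous block of length $a_i$, we have $\#_{x_i}(A),\#_{x_i}(B)\in\{\floor{a_i/2},\ceil{a_i/2}\}$, so each multiplicity is split as evenly as possible.

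Next we verify the inequality $\#_x(A)+\#_y(B)\leq \floor{d/2}+1$ for $x\neq y$. If $x$ or $y$ is not in $T$, the bound is trivial: $\#_x(A)\leq\abs{A}=\floor{d/2}$, and $\#_y(B)\leq \ceil{a_y/2}$. In the latter case we need $\ceil{a_y/2}\leq\floor{d/2}+1$, which follows from $a_y\leq d$. Otherwise $x=x_i$ and $y=x_j$ with $i\neq j$. The left-hand side is at most $\ceil{a_i/2}+\ceil{a_j/2}\leq (a_i+a_j)/2+1\leq d/2+1$. Since the left side is an integer, it is at most $\floor{d/2+1}=\floor{d/2}+1$.

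The one delicate point is the equality case $a_i+a_j=d$ with both $a_i,a_j$ odd, so that $d$ is even. There the bound $(a_i+a_j)/2+1=d/2+1$ is still integral and equal to $\floor{d/2}+1$, so there is no real obstacle. With the contiguous-block alternating distribution, each term exceeds $a/2$ by at most $1/2$, so the whole argument reduces to this integer rounding check.
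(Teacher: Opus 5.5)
Your proof is correct and is essentially the paper's argument. Laying the copies out in contiguous blocks and assigning positions alternately to $B$ and $A$ is another way to get what the paper builds explicitly: every element receives $\lfloor \#_x(T)/2\rfloor$ or $\lceil \#_x(T)/2\rceil$ copies in $A$. The paper instead gives each $x$ exactly $\lfloor \#_x(T)/2\rfloor$ copies and then one extra copy to $\lfloor |O|/2\rfloor$ of the odd-multiplicity elements. After that, your bound $\lceil a_x/2\rceil+\lceil a_y/2\rceil\leq (a_x+a_y)/2+1\leq d/2+1$, together with integrality, is the same as the paper's.

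Sorting by multiplicity is never used, and the separate case for $x\notin T$ or $y\notin T$ is unnecessary: taking $a_x=0$ covers it. Both are harmless.
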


\begin{proof}
  Let $O\subseteq S$ be the set of elements that appear an odd number of
  times in $T$, i.e.,
  \[ O \eqdef \set*{ x\in S : \#_x(T) \text{ is odd} }.\]
  Since $d=\sum_{x\in S}\#_x(T)$, we have $d\equiv \abs{O}\pmod{2}$. Moreover,
  \[
  \sum_{x\in S} \floor*{\frac{\#_x(T)}{2}} = \frac{d-\abs{O}}{2}.
  \]
  We create a partition of $T$ into multisets $A$ and $B$ by
  specifying $A$, and thus, the remaining items are placed in $B$.

  For each $x\in S$, we place $\floor{\#_x(T)/2}$ copies of $x$ in
  $A$. Then, we arbitrarily choose $\floor{\abs{O}/2}$ elements from
  $O$, which are added to $A$ (as a multiset). Thus, each element
  $x\in S$ appears $\floor{\#_x(T)/2}$ or $\ceil{\#_x(T)/2}$ in $A$,
  with the latter option occurring exactly $\floor{\abs{O}/2}$
  times. It follows that
  \[
  \abs*{A} = \sum_{x\in S} \floor*{\frac{\#_x(T)}{2}} + \floor*{\frac{\abs{O}}{2}} = \frac{d-\abs{O}}{2}+\floor*{\frac{\abs{O}}{2}} = \floor*{\frac{d}{2}}.
  \]

  Finally, let $x,y\in S$, $x\neq y$, be two distinct elements. We have
  \begin{align*}
    \#_x(A) &\leq \ceil*{\frac{\#_x(T)}{2}}, \\
    \#_y(B) & = \#_y(T)-\#_y(A) \leq \#_y(T) - \floor*{\frac{\#_y(T)}{2}} = \ceil*{\frac{\#_y(T)}{2}}.
  \end{align*}
  Hence,
  \[
  \#_x(A)+\#_y(B) \leq \ceil*{\frac{\#_x(T)}{2}} + \ceil*{\frac{\#_y(T)}{2}}
  \leq \floor*{\frac{\#_x(T)+\#_y(T)}{2}}+1 \leq \floor*{\frac{d}{2}}+1.
  \]
\end{proof}

In the next lemma, we study $2\times n$ matrices of a form that will later be useful in the main theorem.

\begin{lemma}
  \label{lem:matrix}
  Let $C,Y,Z\in\F_q^{2\times n}$, $n\geq 2$, be such that
  \begin{align*}
    C&=Y+Z, &
    \rank C &= 2, &
    \rank Y&=\rank Z = 1.
  \end{align*}
  Then there exists $P\in\F_q^{2\times 2}$ such that
  \begin{align*}
    Y &= PC, &
    Z &= (I_2-P)C, &
    P^2 &= P, &
    \rank P &= 1.
  \end{align*}
  In particular, $\im P$ and $\ker P$ are distinct one-dimensional
  subspaces of $\F_q^2$ and $\ker(I_2-P)=\im P$.
\end{lemma}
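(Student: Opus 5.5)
Since $\rank C=2$, the rows of $C$ are linearly independent, so $C$ has a right inverse: there is $Q\in\F_q^{n\times 2}$ with $CQ=I_2$. I would define $P\eqdef YQ\in\F_q^{2\times 2}$. The whole argument then rests on one claim, namely that $Y=PC$.

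To prove the claim, first compare row spaces. Since $C=Y+Z$, we have $\rowsp(C)\subseteq\rowsp(Y)+\rowsp(Z)$. The left side has dimension $2$, and each space on the right has dimension $1$. Hence $\rowsp(Y)+\rowsp(Z)=\rowsp(C)$, and in particular $\rowsp(Y)\subseteq\rowsp(C)$ and $\rowsp(Z)\subseteq\rowsp(C)$. So every row of $Y$ is a combination of the rows of $C$, i.e., $Y=P'C$ for some $P'\in\F_q^{2\times 2}$. Multiplying on the right by $Q$ gives $P'=P'CQ=YQ=P$, so $Y=PC$. Then $Z=C-Y=(I_2-P)C$.

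Next I would establish the properties of $P$. Because $C$ has full row rank, the map $M\mapsto MC$ is injective on $2\times 2$ matrices, and $\rank(MC)=\rank M$. Hence $\rank P=\rank Y=1$, and $\rank(I_2-P)=\rank Z=1$. For idempotence, note that $P$ has rank $1$, so $P^2=\tr(P)\,P$. If $\tr P=0$, then $P$ would be nilpotent, so $I_2-P$ would be invertible with rank $2$, a contradiction; hence $\tr P\neq 0$. Also, $\rank(I_2-P)=1$ means $1$ is an eigenvalue of $P$. The other eigenvalue is $0$ (as $\rank P=1$), so $\tr P=1$ and $P^2=P$.

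Finally, the "in particular" statements follow from standard facts about idempotents: $\F_q^2=\im P\oplus\ker P$, both are one-dimensional since $\rank P=1$, and they are distinct. Moreover, $\ker(I_2-P)=\{x: Px=x\}=\im P$, since $P$ acts as the identity on its image. The main obstacle is justifying $\rowsp(Y),\rowsp(Z)\subseteq\rowsp(C)$; this is exactly where the dimension count $2\le 1+1$ forces equality. The remaining steps are linear algebra.
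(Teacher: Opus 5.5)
Your proof is correct and follows essentially the same route as the paper: the dimension count forces $\rowsp(Y)\subseteq\rowsp(C)$, a right inverse of $C$ transfers the ranks of $Y$ and $Z$ to $P$ and $I_2-P$, and a $2\times 2$ trace computation gives $\tr P=1$ and hence $P^2=P$. The paper phrases that last step via $\det(I_2-P)=1-\tr P+\det P$ and Cayley--Hamilton, while you use $P^2=(\tr P)P$ and the eigenvalue $1$; your separate case $\tr P=0$ is redundant, since $1$ being a root of the characteristic polynomial $\lambda(\lambda-\tr P)$ already forces $\tr P=1$.
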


\begin{proof}
  We have
  \[
  \rowsp(C) \subseteq \rowsp(Y)+\rowsp(Z).
  \]
  The LHS has dimension $2$, while that on the right at most $2$. Hence,
  \[
  \rowsp(C) = \rowsp(Y)+\rowsp(Z).
  \]
  In particular, $\rowsp(Y),\rowsp(Z)\subseteq \rowsp(C)$. Thus, there exist
  $P,Q\in\F_q^{2\times 2}$ such that
  \begin{align*}
    Y&=PC, & Z&=QC.
  \end{align*}
  Since $C$ has full rank, there exists $W\in\F_q^{n\times 2}$ such that
  $CW=I_2$. Thus,
  \[
  I_2= CW = YW+ZW = PCW + QCW = P+Q.
  \]

  On the one hand, we trivially have
  \[ \rank P \geq \rank (PC).\]
  On the other hand, $P=(PC)W$, so
  \[ \rank P \leq \rank (PC).\]
  Similar statements hold for $Q$. Thus,
  \begin{align*}
    \rank P &= \rank (PC) = \rank Y = 1, \\
    \rank Q &= \rank (QC) = \rank Z = 1, \\
  \end{align*}
  and therefore
  \[ \det P = \det (I_2-P) = 0.\]
  For a $2\times 2$ matrix
  \[ \det (I_2-P) = 1-\tr P + \det P,\]
  so $\tr(P)=1$. By the Cayley-Hamilton Theorem,
  \[ P^2-\tr(P)P+\det(P)I_2 = 0,\]
  which in our case becomes
  \[ P^2=P.\]

  If $x\in \im P \cap \ker P$, then there exists $y\in\F_q^2$ such
  that $x=Py$. Then
  \[ 0 = P x = P(Py) = P^2 y = Py = x,\]
  so $ \im P \cap \ker P = \set{0}$, namely, $\im P$ and $\ker P$ are
  distinct one-dimensional subspaces. Finally, $P$ acts as the
  identity on $\im P$, so $\ker(I_2-P)=\im P$.
\end{proof}

We are now in a position to state and prove the main theorem of this section, that shows the conjecture holds for order $t=2$.

\begin{theorem}
  \label{th:t2}
  Let $\cC \subseteq \F_q^n$, $n\geq 2$, be a linear code with $\dim
  \cC\geq 2$. Then
  \[ d_2(\cC) \leq 2R_2(\cC)+2,\]
  and in particular,
  \[ \delta_2(\cC) \leq R_2(\cC).\]
\end{theorem}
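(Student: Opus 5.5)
The plan is to exhibit one explicit matrix $V\in\F_q^{2\times n}$ whose distance to every element of $\cC^2$ is at least $\ceil{d/2}-1$, where $d\eqdef d_2(\cC)$. That gives $R_2(\cC)\geq \ceil{d/2}-1\geq (d-2)/2$, which is the first inequality. The second follows because $\delta_2(\cC)=\floor{(d-1)/2}\leq \ceil{d/2}-1$.

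\textbf{Construction of $V$.} Since $\dim\cC\geq 2$, there is a $C\in\cC^2$ with $\rank C=2$ and $\wt(C)=d$. Write $c_j=C_{|j}$. For $j\in\supp(C)$, each column $c_j$ is a nonzero vector of $\F_q^2$, so it determines a projective point $\ang{c_j}\in\PG(1,q)$. Let $T$ be the multiset of these projective points; then $\abs{T}=d$. Apply Lemma~\ref{lem:partition} to $T$ to split $\supp(C)$ into positions $A$ and $B$, with $\abs{A}=\floor{d/2}$ and $\abs{B}=\ceil{d/2}$. The split has the property that for distinct points $x\neq y$, the number of positions in $A$ with column in $x$, plus the number of positions in $B$ with column in $y$, is at most $\floor{d/2}+1$. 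Define $V$ by $V_{|j}=c_j$ for $j\in A$, and $V_{|j}=0$ otherwise.

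\textbf{Lower bound on $d(V,C')$.} Fix an arbitrary $C'\in\cC^2$ and set $Y\eqdef V-C'$. I split into cases according to the ranks of $C'$ and $C-C'$, both of which lie in $\cC^2$.

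If $\rank C'=2$, then $\wt(C')\geq d$. Since $C'=V-Y$, we get $\wt(C')\leq\abs{A}+\wt(Y)$, hence $\wt(Y)\geq \ceil{d/2}$.

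If $\rank(C-C')=2$, use $C-C'=(C-V)+Y$ together with $\wt(C-V)=\abs{B}$. This gives $\wt(Y)\geq d-\abs{B}=\floor{d/2}$.

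If $C'=0$, then $\wt(Y)=\abs{A}$. If $C'=C$, then $\wt(Y)=\abs{B}$.

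The remaining case, which is the crux, is $\rank C'=\rank(C-C')=1$. Since $C'+(C-C')=C$ has rank $2$, Lemma~\ref{lem:matrix} gives an idempotent $P$ of rank $1$ with $C'=PC$ and $C-C'=(I_2-P)C$. Moreover, $x\eqdef\ker P$ and $y\eqdef\im P$ are distinct projective points. Now compute the columns of $Y$ on $\supp(C)$:
\begin{itemize}
\item for $j\in A$, $Y_{|j}=(I_2-P)c_j$, which vanishes iff $\ang{c_j}=y$;
\item for $j\in B$, $Y_{|j}=-Pc_j$, which vanishes iff $\ang{c_j}=x$.
\end{itemize}
Hence $\wt(Y)\geq d-\#_y(A)-\#_x(B)$. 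The partition property then gives $\wt(Y)\geq d-\floor{d/2}-1=\ceil{d/2}-1$. Columns of $C'$ outside $\supp(C)$ only increase the weight.

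Combining the cases yields $\min_{C'}d(V,C')\geq\ceil{d/2}-1$. The main obstacle is exactly this last rank-one case. It is the reason the projective-point partition of Lemma~\ref{lem:partition} is needed: an arbitrary balanced split would fail when many columns share a projective point aligned with $\ker P$ or $\im P$.
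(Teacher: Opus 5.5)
Your proposal is correct and follows the paper's proof essentially step for step. It uses the same minimum-weight rank-$2$ matrix $C$, the same projective-point split from Lemma~\ref{lem:partition}, the same matrix $V$ (the paper's $X$), and the same application of Lemma~\ref{lem:matrix} in the rank-one case. The only difference is framing. You bound $d(V,C')$ directly in every rank case, whereas the paper assumes $d_2\geq 2R_2+3$; the weight bounds then force $\rank Y=\rank Z=1$, so only the rank-one case has to be analyzed.
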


\begin{proof}
  For convenience, define
  \begin{align*}
    d_2 &= d_2(\cC), &
    r_2 &= R_2(\cC).
  \end{align*}
  Assume to the contrary that
  \begin{equation}
    \label{eq:asscont}
    d_2\geq 2r_2+3.
  \end{equation}
  Choose $C\in \cC^2 \subseteq \F_q^{2\times n}$ such that
  $\wt(C)=d_2$ and $\rank(C)=2$ (here we use $\dim \cC\geq
  2$). Thus, $\abs{\supp(C)}=d_2$, and $C_{|j}\neq 0$ for every
  $j\in\supp(C)$.

  Each non-zero column $C_{|j}\in\F_q^2$ determines a projective point
  $\ang{C_{|j}}\in \PG(1,q)$. Apply Lemma~\ref{lem:partition} to the
  multiset $\mset{ \ang{C_{|j}} : j\in\supp(C) }$ to obtain a partition
  of $\supp(C)$ into two sets, $A$ and $B$, such that
  \begin{align*}
    A \cup B &= \supp(C), &
    \abs{A} &= \floor*{\frac{d_2}{2}}, &
    \abs{B} &= \ceil*{\frac{d_2}{2}}, &
  \end{align*}
  and for any two distinct projective points, $\ang{x},\ang{y}\in\PG(1,q)$,
  \[
  \#_{\ang{x}}(\mset{ \ang{C_{|j}} : j\in A }) + \#_{\ang{y}}(\mset{ \ang{C_{|j}} : j\in B }) \leq \floor*{\frac{d_2}{2}}+1.
  \]

  Define $X\in\F_q^{2\times n}$ in the following way:
  \[ X_{|j} =
  \begin{cases}
    C_{|j}, & j\in A, \\
    0, & \text{otherwise.}
  \end{cases}
  \]
  By the definition of the second-order covering radius, there exists
  $Y\in\cC^2$ such that
  \begin{equation}
    \label{eq:xminy}
    \wt(X-Y)\leq r_2.
  \end{equation}
  Define $Z\eqdef C-Y$. By the linearity of the code, $Z\in\cC^2$ as well.

  Let us consider the supports of $Y$ and $Z$. Since $Y=X-(X-Y)$, one has
  \[
  \supp(Y) \subseteq \supp(X) \cup\supp(X-Y),
  \]
  which implies
  \[
  \wt(Y) \leq \abs{A} + r_2 \leq \floor*{\frac{d_2}{2}}+r_2 < d_2,
  \]
  where the last inequality follows
  from~\eqref{eq:asscont}. Similarly, $Z=(C-X)+(X-Y)$, so
  \[
  \supp(Z) \subseteq \supp(C-X) \cup\supp(X-Y),
  \]
  and hence,
  \[
  \wt(Z) \leq \abs{B} + r_2 \leq \ceil*{\frac{d_2}{2}}+r_2 < d_2.
  \]

  By~\eqref{eq:defdt}, no matrix in $\cC^2$ of block
  weight strictly smaller than $d_2$ can have rank $2$. Thus, $\rank
  Y\leq 1$ and $\rank Z \leq 1$. However,
  \[ 2 = \rank C = \rank (Y+Z) \leq \rank Y + \rank Z \leq 2.
  \]
  It follows that
  \[ \rank Y = \rank Z = 1.\]

  By Lemma~\ref{lem:matrix}, there exists $P\in\F_q^{2\times 2}$ such
  that $Y=PC$, $Z=(I_2-P)C$, and $P^2=P$. Additionally, since $\dim
  \ker P = \dim \im P = 1$, all the non-zero vectors of $\im P$ are a
  single projective point $\ang{i}\in\PG(1,q)$, and similarly, all the
  non-zero vectors of $\ker P$ are a single projective point
  $\ang{k}\in\PG(1,q)$, and the two are distinct.

  We now consider the second-order
  covering radius of the code by measuring the distance of $X-Y$ to
  the code, coordinate by coordinate. If $j\in A$, then
  \[ (X-Y)_{|j} = C_{|j}- PC_{|j} = (I_2-P)C_{|j}.
  \]
  The $j$-th column vanishes exactly when $C_{|j}\in \ker(I_2 -P) =
  \im P$, by Lemma~\ref{lem:matrix}. If $j\in B$, then $X_{|j}=0$, and
  so
  \[ (X-Y)_{|j} = -PC_{|j},\]
  which vanishes exactly when $C_{|j}\in\ker P$. Finally, if $j\notin
  A\cup B$, then $C_{|j}=0$ and also $X_{|j}=0$. In addition,
  $Y_{|j}=PC_{|j}$ and so $(X-Y)_{|j}=0$. It follows that
  \begin{align*}
  \wt(X-Y) &= d_2-(\#_{\ang{i}}(A) + \#_{\ang{k}}(B))\\ &\overset{(a)}{\geq} d_2-\parenv*{\floor*{\frac{d_2}{2}}+1} = \ceil*{\frac{d_2}{2}}-1 \overset{(b)}{\geq} r_2+1,
  \end{align*}
  where $(a)$ follows from Lemma~\ref{lem:partition}, and $(b)$
  follows by~\eqref{eq:asscont}. However, this contradicts~\eqref{eq:xminy}.

  Thus,
  \[ d_2(\cC)\leq 2R_2(\cC)+2,\]
  and then for $\delta_2(\cC)$ we get
  \[ \delta_2(\cC) = \floor*{\frac{d_2(\cC)-1}{2}} \leq R_2(\cC).\]
\end{proof}

\section{Bounds by Rate}
\label{sec:rate}

In this section we change our approach, compared with the previous section. Instead of looking at a specific order, we look at codes with a certain range of rates. We show that codes of rate at most $3/5$ satisfy the conjecture for all relevant orders.

The main theorem requires several preparatory steps which we briefly outline. First, in Lemma~\ref{lem:Rtt}, we show the conjecture holds when $R_t(\cC)=t$. We then move on to show it holds for very high orders, $\max\set{1,n-k-4}\leq t\leq \min\set{k,n-k}$ in Lemma~\ref{lem:hight}. Another stepping stone is proved in Lemma~\ref{lem:step5t}, where low dimension is considered, $k\leq 5t-2$, which then leads to the main result in Theorem~\ref{th:rate35}.

\begin{lemma}
  \label{lem:Rtt}
  Let $\cC$ be an $[n,k]_q$ code. If $1\leq t\leq \min\set{k,n-k}$ and
  $R_t(\cC)=t$, then $d_t(\cC)\leq 2t+1$, and in particular,
  \[\delta_t(\cC)\leq R_t(\cC).\]
\end{lemma}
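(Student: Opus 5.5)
The plan is to pass to a parity-check description of both quantities, and to use $R_t(\cC)=t$ to force many columns of a parity-check matrix into one small subspace. Let $H\in\F_q^{(n-k)\times n}$ be a parity-check matrix of $\cC$, with columns $h_1,\dots,h_n$.

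\textbf{Step 1: the two standard descriptions.}
\begin{itemize}
\item The covering radius is
\[
R_t(\cC)=\max_{S\in\F_q^{(n-k)\times t}} \min\set*{\abs{J} : \text{every column of } S \text{ lies in } \operatorname{span}\set{h_j : j\in J}}.
\]
This follows because $V$ is within distance $r$ of $\cC^t$ exactly when the syndrome matrix $HV^{\top}$ is produced by an error matrix supported on $r$ columns.
\item The generalized Hamming weight (Wei's duality) is
\[
d_t(\cC)=\min\set*{\abs{J} : \abs{J}-\rank(H_J)\geq t},
\]
where $H_J$ is the submatrix of columns indexed by $J$. Indeed, $\cC$ restricted to codewords supported in $J$ has dimension $\abs{J}-\rank(H_J)$.
\end{itemize}
So it suffices to find a set $J$ with $\abs{J}=2t+1$ and $\rank(H_J)\leq t+1$.

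\textbf{Step 2: the boundary case $n-k=t$.} Here Lemma~\ref{lem:knownsingleton} gives $d_t(\cC)\leq n-k+t=2t$ directly. So we may assume $n-k\geq t+1$.

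\textbf{Step 3: columns inside a $(t+1)$-dimensional subspace.} Fix a subspace $W\subseteq\F_q^{n-k}$ of dimension $t+1$, and let $J_W=\set{j : h_j\in W}$, with multiplicities counted.

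For any $t$-dimensional subspace $U$ of $W$, take $S$ with column space $U$. Since $R_t(\cC)=t$, there are $t$ columns of $H$ whose span contains $U$. That span has dimension at most $t$, so it equals $U$. Hence these $t$ columns lie in $U\subseteq W$ and are linearly independent. So every hyperplane of $W$ contains at least $t$ nonzero columns indexed by $J_W$.

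\textbf{Step 4: averaging over hyperplanes (the main step).} Let $m$ be the number of nonzero columns in $J_W$, and suppose $m\leq 2t$.
\begin{itemize}
\item $W$ has $(q^{t+1}-1)/(q-1)$ hyperplanes.
\item Each nonzero vector of $W$ lies in exactly $(q^t-1)/(q-1)$ of them.
\item The average number of the $m$ columns lying in a hyperplane is therefore
\[
m\,\frac{q^t-1}{q^{t+1}-1}<\frac{m}{q}\leq \frac{2t}{2}=t .
\]
\end{itemize}
So some hyperplane contains fewer than $t$ of these columns, contradicting Step 3. Hence $m\geq 2t+1$.

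\textbf{Step 5: conclusion.} Choose any $2t+1$ indices from $J_W$. Their columns lie in $W$, so their rank is at most $t+1$. Then $\abs{J}-\rank(H_J)\geq t$, and Step 1 gives $d_t(\cC)\leq 2t+1$. It follows that $\delta_t(\cC)=\floor{(d_t(\cC)-1)/2}\leq t=R_t(\cC)$.

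The main obstacle is Step 4: turning the "every $t$-subspace is spanned by columns" property into a lower bound on how many columns lie in one $(t+1)$-dimensional subspace. The averaging argument handles it cleanly, and it relies on $q\geq 2$ and $t\geq 1$. The other point to verify carefully is the syndrome characterization of $R_t$ used in Step 1.
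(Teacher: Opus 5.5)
Your proposal is correct and follows the paper's route: you dispose of $n-k=t$ via the generalized Singleton bound, then double-count columns of $H$ over the hyperplanes of a $(t+1)$-dimensional subspace $W$ to get at least $2t+1$ columns in $W$, and these give a $t$-dimensional subcode supported on $2t+1$ coordinates. The differences are minor: you spell out the syndrome characterization of $R_t$, which the paper uses tacitly to say each $t$-dimensional subspace is spanned by $t$ columns of $H$, and you phrase the averaging as a contradiction rather than the paper's direct bound $N_W\geq\lceil t(q^{t+1}-1)/(q^t-1)\rceil\geq qt+1$.
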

\begin{proof}
  If $n-k=t$, then by the generalized Singleton bound~\cite[Theorem
    1]{Wei91}, $d_t(\cC)\leq n-k+t=2t$, and the claim is proved. Let
  us therefore continue with the case $n-k\geq t+1$. Let
  $H\in\F_q^{(n-k)\times n}$ be parity-check matrix for $\cC$, hence
  $\rank H=n-k$.

  Choose an arbitrary $(t+1)$-dimensional subspace $W\subseteq
  \F_q^{n-k}$, and denote by $N_W$ the number of non-zero columns of
  $H$ that appear in $W$. Every $t$-dimensional subspace $T\subseteq
  W$ is spanned by some $t$ columns of $H$, thus, $T$ contains at
  least $t$ non-zero columns from $H$. By straightforward counting
  there are $(q^{t+1}-1)/(q-1)$ such subspaces $T$ of $W$, and any
  non-zero vector in $W$ is contained in exactly $(q^t-1)/(q-1)$ of
  them. It follows that
  \[ N_W \geq \ceil*{t \frac{q^{t+1}-1}{q^t-1}} \geq qt+1 \geq 2t+1.\]
  Choose $2t+1$ of these non-zero columns of $H$ in $W$. Placing these
  in an $(n-k)\times (2t+1)$ matrix denoted $H'$, we have $\rank
  H'\leq t+1$, and thus as a parity-check matrix for a code $\cC'$,
  the code has dimension at least $\dim \cC'\geq t$. But that means
  there is a subcode of $\cC$, of dimension $t$, whose support is
  restricted to the $2t+1$ chosen coordinates, namely, $d_t(\cC)\leq
  2t+1$.
\end{proof}

We now move on to consider the conjecture for high orders.

\begin{lemma}
  \label{lem:hight}
  Let $\cC$ be an $[n,k]_q$ code. For all $\max\set{1,n-k-4}\leq t\leq
  \min\set{k,n-k}$ we have
  \[ \delta_t(\cC)\leq R_t(\cC).\]
\end{lemma}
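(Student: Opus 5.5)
We plan to combine the generalized Singleton bound (Lemma~\ref{lem:knownsingleton}) with a lower bound on $R_t(\cC)$, splitting into cases according to $R_t(\cC)$. Write $m=n-k$, so that $m-4\leq t\leq m$. Lemma~\ref{lem:knownsingleton} gives $d_t(\cC)\leq m+t\leq 2t+4$, hence
\[
\delta_t(\cC)=\floor*{\frac{d_t(\cC)-1}{2}}\leq \floor*{\frac{m+t-1}{2}}\leq t+1 .
\]
It therefore suffices to show $R_t(\cC)\geq t$, and in the boundary case to handle the loss of one. More precisely: if $R_t(\cC)\geq t+1$ we are done immediately. If $R_t(\cC)=t$, then Lemma~\ref{lem:Rtt} gives $d_t(\cC)\leq 2t+1$, so $\delta_t(\cC)\leq t=R_t(\cC)$. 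The remaining task is to rule out $R_t(\cC)\leq t-1$. When $m\leq t+2$ this is not even needed in full, since then $\delta_t\leq\floor{(2t+1)/2}=t$ and we only need $R_t\geq t$; the same reduction applies.

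To show $R_t(\cC)\geq t$, we use the ball-covering bound~\eqref{eq:nec}: it suffices to check $\cV_{q^t}(n,t-1)<q^{tm}$. This is false in general, since $n$ can be huge compared with $m$. So we argue structurally instead, via parity-check matrices. $R_t(\cC)$ is the least $r$ such that every $t\times m$ matrix $S$ (a syndrome of $V$) can be written as $S=H_JU$ for some set $J$ of at most $r$ columns of the parity-check matrix $H$. Choose $S$ of rank $\min\{t,m\}=t$, with rows spanning a $t$-dimensional space, or equivalently consider $S^{\top}$ with columns spanning a $t$-dimensional subspace of $\F_q^{m}$. Then $H_J$ must have column space containing a $t$-dimensional subspace, so $\abs{J}\geq t$. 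Hence $R_t(\cC)\geq \min\{t,m\}=t$. This is in fact the easy general fact $R_t\geq t$ for $t\leq n-k$, and with it the proof closes without needing~\eqref{eq:nec}.

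The main obstacle is the syndrome reformulation. Here $V-C$ having block support $J$ means $H(V-C)^{\top}=HV^{\top}$ is a combination of the columns of $H$ indexed by $J$. We must state this carefully, and choose $V$ so that $HV^{\top}$ (an $m\times t$ matrix) has rank $t$, which is possible because $H$ has full rank $m\geq t$. Once $R_t\geq t$ is established, the case analysis above, with Lemma~\ref{lem:Rtt} covering equality, completes the proof. The constraint $t\geq m-4$ is exactly what makes the Singleton bound yield $\delta_t\leq t+1$.
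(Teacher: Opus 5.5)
Your proof is correct and follows essentially the same argument as the paper. The generalized Singleton bound gives $\delta_t(\cC)\leq t+1$, Lemma~\ref{lem:Rtt} handles the case $R_t(\cC)=t$, and the lower bound $R_t(\cC)\geq t$ excludes everything else. The only difference is that you prove $R_t(\cC)\geq t$ yourself via the syndrome reformulation (choosing $V$ with $HV^{\top}$ of rank $t$), whereas the paper simply cites this bound from~\cite{EliFirSch21a}.
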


\begin{proof}
  From~\cite[Eq. (1)]{EliFirSch21a}, we have $R_t(\cC)\geq t$. If
  $R_t(\cC)=t$, then by Lemma~\ref{lem:Rtt} the claim is
  proved. Otherwise, we have $R_t(\cC)\geq t+1$.

  From Lemma~\ref{lem:knownsingleton} we have
  \[ d_t(\cC)\leq n-k+t \leq 2t+4.\]
  Hence,
  \[\delta_t(\cC) = \floor*{\frac{d_t(\cC)-1}{2}}\leq t+1 \leq R_t(\cC).\]
\end{proof}

\begin{corollary}
  Let $\cC$ be an $[n,k]_q$ code, with redundancy $n-k\leq 7$.
  Then for all $1\leq t\leq \min\set{k,n-k}$ we have
  \[ \delta_t(\cC)\leq R_t(\cC).\]
\end{corollary}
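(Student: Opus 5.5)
The corollary follows directly from Lemma~\ref{lem:hight}, and the plan is simply to check that its range of orders covers everything. Fix an $[n,k]_q$ code $\cC$ with $n-k\leq 7$ and an order $t$ with $1\leq t\leq \min\set{k,n-k}$. Lemma~\ref{lem:hight} gives $\delta_t(\cC)\leq R_t(\cC)$ for every $t$ satisfying $\max\set{1,n-k-4}\leq t\leq \min\set{k,n-k}$. The upper end of this range matches the hypothesis, so only the lower end needs attention.

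If $n-k\leq 5$, then $n-k-4\leq 1$, so the lower bound is just $t\geq 1$. The lemma therefore covers every relevant order. The remaining cases are $n-k\in\set{6,7}$, where the lemma needs $t\geq 2$ or $t\geq 3$, respectively.

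The uncovered orders are handled by earlier results:
\begin{itemize}
\item For $t=1$, the inequality is the classical relation between packing and covering radii, $\delta(\cC)\leq R(\cC)$, as noted in the introduction.
\item For $t=2$ (needed only when $n-k=7$), we have $k\geq t=2$ and $n\geq 2$, so Theorem~\ref{th:t2} applies and gives $\delta_2(\cC)\leq R_2(\cC)$.
\end{itemize}

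Combining these cases covers every $1\leq t\leq\min\set{k,n-k}$. There is no real obstacle here. The only step needing care is confirming the hypotheses of Theorem~\ref{th:t2}, namely $\dim\cC\geq 2$ and $n\geq 2$, and both follow from $t=2\leq \min\set{k,n-k}$.
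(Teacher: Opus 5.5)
Your proof is correct and is essentially the paper's argument. The paper likewise handles $t=1$ classically, $t=2$ by Theorem~\ref{th:t2}, and $t\geq 3$ by Lemma~\ref{lem:hight} (since $t\geq 3\geq n-k-4$); organizing the cases by the redundancy $n-k$ instead of by the order $t$ is only a cosmetic difference.
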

\begin{proof}
  For $t=1$ this known as the classical packing radius vs. the
  covering radius of the code. For $t=2$ this is proved in
  Theorem~\ref{th:t2}. For $t\geq 3$ we have $t\geq n-k-4$, which is
  covered in Lemma~\ref{lem:hight}.
\end{proof}

The final stepping stone before the main theorem of this section considers low-dimension codes with respect to the conjecture.

\begin{lemma}
\label{lem:step5t}
  Let $\cC$ be an $[n,k]_q$ code. If $3\leq t\leq \min\set{k,n-k}$,
  and $k\leq 5t-2$, then
  \[ \delta_t(\cC) \leq R_t(\cC).\]
\end{lemma}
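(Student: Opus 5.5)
We would prove the statement by squeezing $d_t(\cC)$ from above using Lemma~\ref{lem:knownvla} and bounding $R_t(\cC)$ from below using the ball-covering bound~\eqref{eq:nec}. We then check that $\floor{(d_t-1)/2}$ falls below the lower bound whenever $k\leq 5t-2$. The preliminary reductions are as follows. If $n-k\leq t+4$, Lemma~\ref{lem:hight} already gives the claim. If $R_t(\cC)=t$, Lemma~\ref{lem:Rtt} gives it. So we may assume $n-k\geq t+5$ and $R_t(\cC)\geq t+1$.

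\emph{Upper bound on $d_t$.} Apply Lemma~\ref{lem:knownvla} with $r=t$, $s=k$ and a free parameter $0\leq \ell\leq k-t$, together with the trivial bound $d_k(\cC)\leq n$. This gives
\[
d_t(\cC)\leq \frac{(q^t-1)q^{k-\ell-t}}{q^{k-\ell}-1}\,(n-\ell).
\]
For $\ell=k-t$ this is just the Singleton bound $n-k+t$. For $\ell$ slightly smaller, say $s-\ell=t+1$, the bound becomes roughly $(1-q^{-t})(n-k+t+1)\cdot\frac{q^{t+1}-q}{q^{t+1}-1}$, which saves a term of order $(n-k)/q$ against Singleton. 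We would keep $\ell$ as a parameter and choose it at the end, probably $\ell=k-t$ or $\ell=k-t-1$.

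\emph{Lower bound on $R_t$.} By~\eqref{eq:nec}, $R_t(\cC)>r$ whenever $\sum_{i\leq r}\binom{n}{i}(q^t-1)^i<q^{t(n-k)}$. We bound the left side crudely by $\binom{n}{\leq r}q^{tr}$. With $r=n-k-m$, the condition becomes $\binom{n}{\leq n-k-m}<q^{tm}$. Since $k\leq 5t-2$ and $\binom{n}{\leq n-k-m}\leq\binom{n}{k+m}\cdot(\text{small factor})$, the number of ``missing'' coordinates is only $k+m=O(t)$. Hence $\binom{n}{k+m}\leq n^{k+m}$, which is compared with $q^{tm}$. The goal is to obtain $R_t(\cC)\geq n-k-m$ with $m$ about $\tfrac{1}{2}(n-k)-\tfrac{t}{2}$ or better. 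That is exactly what is needed, because $\delta_t\leq\floor{(n-k+t-1)/2}$ by Singleton. So it suffices to show $R_t\geq \frac{n-k+t-1}{2}$, i.e.\ $m\leq\frac{n-k-t+1}{2}$.

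\emph{Main obstacle.} The hard step is the numerical inequality: verifying
\[
\binom{n}{\leq \frac{n-k+t-1}{2}}(q^t-1)^{\frac{n-k+t-1}{2}}<q^{t(n-k)}
\]
uniformly in $n$. For $n-k$ large relative to $t$, the entropy factor $\binom{n}{\cdot}\leq 2^n$ threatens to dominate. Here the hypothesis $k\leq 5t-2$ should enter: it gives $n\leq (n-k)+5t-2$, so $2^n\leq 2^{n-k}2^{5t-2}$. The requirement then reduces to roughly $2^{n-k}2^{5t}<q^{t(n-k-t+1)/2}$, which holds once $n-k\geq t+5$ and $t\geq 3$. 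Borderline cases would be handled with the sharper $d_t$ bound from a smaller $\ell$. I expect the constants $5t-2$ and $t+5$ to come precisely out of this bookkeeping, and I would first try $q=2$, where the inequality is tightest, then note monotonicity in $q$.
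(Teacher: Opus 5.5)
Your framework matches the paper's: the Singleton bound gives $\delta_t(\cC)\leq\floor{(n-k+t-1)/2}$, Lemma~\ref{lem:hight} removes the cases $n-k\leq t+4$, and the rest should follow from~\eqref{eq:nec}. But the proposal stops where the actual work lies, and the estimates you suggest for that step fail.

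First, a slip in the target. To get $R_t(\cC)\geq\floor{(n-k+t-1)/2}$ from~\eqref{eq:nec}, you need $\cV_{q^t}(n,r)<q^{t(n-k)}$ with $r=\floor{(n-k+t-3)/2}$. Your display instead uses radius $(n-k+t-1)/2$, and with that radius the inequality is false. For example, with $q=2$, $t=3$, $k=13$, $n=21$, already $\binom{21}{5}7^5>2^{24}$. With the correct radius the inequality is only barely true there: $\cV_{2^3}(21,4)=14836613<16777216=2^{24}$.

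The genuine gap is that none of your bounds can prove the correct inequality near $n-k=t+5$. You cannot avoid this region, because $k\leq 5t-2$ allows $n=6t+3$.
\begin{itemize}
\item The estimate $\sum_{i\leq r}\binom{n}{i}\leq 2^n$ would need $2^{6t+3}<q^{4t}$ there. This is false for $q=2$ and every $t$, and also for $q=3$ with $t\leq 8$. For $q=2$ and $k=5t-2$ it keeps failing for all $t+5\leq n-k\leq t+9$.
\item Even the sharper bound $\binom{n}{r}q^{tr}$ fails at $(q,t)=(2,3)$, since $\binom{21}{4}=5985>2^{12}$. The bound via $n^{k+m}$ is worse still.
\item Your fallback through Lemma~\ref{lem:knownvla} gains nothing in this region. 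For instance, $s-\ell=t+1$ and $n-k=t+5$ give $\floor{(2t+6)(1-\frac{q-1}{q^{t+1}-1})}=2t+5$, which is exactly the Singleton value.
\end{itemize}

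The missing idea is a way to carry a tight base case to all larger redundancies with no loss. The paper does this as follows.
\begin{itemize}
\item It fixes $k$ and proves the base case $n-k=t+5$, $n\leq 6t+3$, from $\binom{6t+3}{t+1}<q^{4t}$, which holds for $q\geq 3$ or $t\geq 6$.
\item It checks the remaining cases $(q,t)\in\set{(2,3),(2,4),(2,5)}$ by computing $\cV_{2^t}(6t+3,t+1)$ exactly.
\item It then inducts on $n$. When the radius is unchanged, $\cV_{q^t}(n+1,r)\leq q^t\cV_{q^t}(n,r)$. When the radius increases,
\[
\cV_{q^t}(n+2,r+1)=q^{2t}\cV_{q^t}(n,r)+(q^t-1)^{r+1}\Bigl(\tbinom{n}{r+1}-(q^t-1)\tbinom{n}{r}\Bigr).
\]
\item The last term is non-positive because of the auxiliary invariant $n\leq q^t(r+1)-1$, which is itself maintained by the induction.
\end{itemize}
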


\begin{proof}
  If $n-k\leq t+4$, then the claim is already proved in
  Lemma~\ref{lem:hight}. Thus, let us consider $n-k \geq t+5$.

Define
  \[ r_{n-k} \eqdef\floor*{\frac{n-k+t-3}{2}}.\]
  Then by Lemma~\ref{lem:knownsingleton},
  \[
  \delta_t(\cC) = \floor*{\frac{d_t(\cC)-1}{2}}
  \leq \floor*{\frac{n-k+t-1}{2}} = r_{n-k}+1.
  \]
  We notice that now it is enough to show that
  \begin{equation}
    \label{eq:need}
    \cV_{q^t}(n,r_{n-k})<q^{t(n-k)},
  \end{equation}
  because by~\eqref{eq:nec} it would imply $R_t(\cC) \geq r_{n-k}+1 \geq
  \delta_t(\cC)$, as claimed.

  We proceed by induction on the redundancy, $n-k$, in the following
  manner. The base case $n-k=t+5$ will be proved separately, as the
  induction base. Then, for the induction step we will prove two
  claims. We note that for the base case, $n-k+t-3=2t+2$ is
  even. Thus, for the first induction step we will increase $n-k$ by
  $1$ (by increasing $n$ by $1$), which does not cause a change in
  $r_{n-k}$ due to the floor operation, i.e., $r_{n-k+1}=r_{n-k}$. The
  second induction step we take is increasing $n-k$ by $2$, which
  increases the radius $r_{n-k}$ by $1$, namely,
  $r_{n-k+2}=r_{n-k}+1$. For technical reasons, we not only
  prove~\eqref{eq:need} in this induction, but also an auxiliary claim, that
  whenever $n-k+t-3$ is even,
  \begin{equation}
    \label{eq:need2}
    n \leq q^t(r_{n-k}+1)-1.
  \end{equation}

  \textbf{Induction Base:} The redundancy is $n-k=t+5$. In this case,
  $r_{n-k}=t+1$. By the lemma requirement of $k\leq 5t-2$ we obtain $n\leq
  6t+3$. Since $\cV_{q^t}(n,r_{n-k})$ is increasing in $n$, it suffices to
  show
  \begin{equation}
    \label{eq:almost}
    \cV_{q^t}(6t+3,t+1) < q^{t(t+5)}.
  \end{equation}
  We generally note that
  \begin{equation}
  \label{eq:upball}
  \cV_{q^t}(n,r) \leq \binom{n}{r}q^{tr},
  \end{equation}
  since the RHS counts the ways to choose a subset of $r$ positions
  out of $n$, filling it in any way from an alphabet of size $q^t$,
  and setting $0$ outside this subset. This obviously over-counts the
  volume of the ball $\cV_{q^t}(n,r)$. Thus, if we show for $t\geq 3$ that
  \[ \binom{6t+3}{t+1} < q^{4t},\]
  it would suffice to prove~\eqref{eq:almost}. We show this is indeed
  satisfied for $q\geq 3$ or $t\geq 6$, i.e., except for
  $(q,t)\in\set{(2,3),(2,4),(2,5)}$. These remaining three cases will
  be checked by inspection against~\eqref{eq:almost}.

  Write $b_t\eqdef \binom{6t+3}{t+1}$. We first show that $b_t/16^t$
  is strictly decreasing for $t\geq 3$. Indeed
  \[
  \frac{b_{t+1}}{b_t}= \frac{6t+9}{t+2}\prod_{j=0}^4 \frac{6t+4+j}{5t+3+j}.
  \]
  For $t\geq 3$, the factor outside the product is less than $6$, the
  first two factors in the product are at most $11/9$, and the last
  three are at most $6/5$. Consequently,
  \begin{equation}
    \label{eq:bratio}
  \frac{b_{t+1}}{b_t} < 6\parenv*{\frac{11}{9}}^2 \parenv*{\frac{6}{5}}^3
  = \frac{1936}{125}<16.
  \end{equation}
  For $q=2$,
  \[ b_6 = \binom{39}{7}=15380937 < 16777216 = 2^{4\cdot 6},\]
  and~\eqref{eq:bratio} shows~\eqref{eq:almost} holds for all $q=2$
  and $t\geq 6$. For $q\geq 3$,
  \[
  b_3 = \binom{21}{4} = 5985 < 3^{4\cdot 3} \leq q^{4\cdot 3},
  \]
  and~\eqref{eq:bratio} shows~\eqref{eq:almost} holds for all $q\geq
  3$ and $t\geq 3$. The only remaining cases are
  $(q,t)\in\set{(2,3),(2,4),(2,5)}$. In these cases, \eqref{eq:almost} is
  checked by inspection,
  \begin{align*}
    \cV_{2^3}(6\cdot 3+3,3+1) = 14836613 &< 16777216 = 2^{3(3+5)}, \\
    \cV_{2^4}(6\cdot 4+3,4+1) = 62202763756 &< 68719476736 = 2^{4(4+5)}, \\
    \cV_{2^5}(6\cdot 5+3,5+1) = 989803358666992 &< 1125899906842624 = 2^{5(5+5)}. 
  \end{align*}

  To complete the induction base, we also need to verify the auxiliary
  claim~\eqref{eq:need2}. Indeed
  \begin{align*}
    q^t(r_{n-k} + 1 ) - 1 & \overset{(a)}{\geq} 8(t+1+1)-1 = 8t+15 \overset{(b)}{\geq} 6t+3 \overset{(c)}{\geq} k+t+5 = n,
  \end{align*}
  where $(a)$ follows from $r_{n-k}=t+1$ and $q^t\geq 2^3$, $(b)$
  holds for $t\geq 1$, and $(c)$ follows from the requirement $k\leq 5t-2$.

  \textbf{Induction Hypothesis:} Assume $n-k+t-3$ is even,
  and~\eqref{eq:need} as well as~\eqref{eq:need2}, hold.

  \textbf{First Induction Step:} Increasing the length $n$ by $1$. We
  first note that $r_{n-k+1}=r_{n-k}$. We use the
  well-known binomial identity, $\binom{a}{b} = \binom{a-1}{b} +
  \binom{a-1}{b-1}$. Using this with the expression for the ball size,
  \begin{align*}
    \cV_{q^t}(n+1,r_{n-k+1}) &= \cV_{q^t}(n+1,r_{n-k}) =
    \cV_{q^t}(n,r_{n-k})+ (q^t-1)\cV(n,r_{n-k}-1) \\
    & \overset{(a)}{\leq} q^t \cV_{q^t}(n,r_{n-k})
    \overset{(b)}{<} q^t q^{t(n-k)} = q^{t(n-k+1)},
  \end{align*}
  where $(a)$ follows from simple monotonicity
  $\cV_{q^t}(n,r_{n-k}-1)\leq \cV_{q^t}(n,r_{n-k})$, and $(b)$ follows
  from the induction hypothesis. Thus, \eqref{eq:need} holds when $n$
  is increased by $1$, and this induction step is proved. We also note
  that we do not need to prove the auxiliary claim in this case, since
  $(n+1)-k+t-3$ is odd.

  \textbf{Second Induction Step:} Increasing the length by $2$. Now we
  have $r_{n-k+2} = r_{n-k}+1$. Using the same strategy as above,
  \begin{align*}
    \cV_{q^t}(n+2,r_{n-k+2}) &= \cV_{q^t}(n+2,r_{n-k}+1) \\
    & = \cV_{q^t}(n,r_{n-k}+1) + 2(q^t-1)\cV_{q^t}(n,r_{n-k}) \\
    & \qquad +(q^t-1)^2\cV_{q^t}(n,r_{n-k}-1).
  \end{align*}
  We observe that
  \begin{align*}
    \cV_{q^t}(n,r_{n-k}+1) &= \cV_{q^t}(n,r_{n-k}) + \binom{n}{r_{n-k}+1}(q^t-1)^{r_{n-k}+1}, \\
    \cV_{q^t}(n,r_{n-k}-1) &= \cV_{q^t}(n,r_{n-k}) - \binom{n}{r_{n-k}}(q^t-1)^{r_{n-k}}. \\
  \end{align*}
  Using this in the above,
  \begin{align*}
    \cV_{q^t}(n+2,r_{n-k+2}) &= q^{2t}\cV_{q^t}(n,r_{n-k}) + \binom{n}{r_{n-k}+1}(q^t-1)^{r_{n-k}+1} \\
    & \qquad - \binom{n}{r_{n-k}}(q^t-1)^{r_{n-k}+2} \\
    & < q^{t(n-k+2)}+ q^{r_{n-k}+1}\parenv*{\binom{n}{r_{n-k}+1} - \binom{n}{r_{n-k}}(q^t-1)},
  \end{align*}
  where the inequality is by the induction hypothesis. We would like
  to show that the parentheses contain a non-positive value, i.e.,
  \[
  \binom{n}{r_{n-k}+1} - \binom{n}{r_{n-k}}(q^t-1) \leq 0.
  \]
  Equivalently, after rearranging, we want to show
  \[ n \leq q^t(r_{n-k}+1) -1.\]
  But that is guaranteed by the induction hypothesis on~\eqref{eq:need2}. Thus,
  \[ \cV_{q^t}(n+2,r_{n-k+2}) < q^{t(n-k+2)},\]
  which completes the induction step for claim~\eqref{eq:need}. We also need
  to prove the induction step for the auxiliary~\eqref{eq:need2}. We have
  \[
  n+2 \overset{(a)}{\leq} q^t(r_{n-k}+1)+1 = q^t(r_{n-k+2}+1)-q^t+1 \overset{(b)}{\leq}
  q^t(r_{n-k+2}+1)-1
  \]
  where $(a)$ is in the induction hypothesis on~\eqref{eq:need2}, and
  $(b)$ follows from $q^t\geq 2$. This completes the induction and the
  proof.
\end{proof}

In the following theorem we shall be manipulating the volume of a ball, and so we present two simple tools. We assume an alphabet of size $Q\geq 2$ and ball radius $r$. Let $0<x<1$ be a real number. Then,
\begin{equation}
\label{eq:generating}
\begin{split}
x^r \cV_{Q}(n,r) &= \sum_{j=0}^r\binom{n}{j}(Q-1)^j x^r
\leq \sum_{j=0}^r\binom{n}{j}(Q-1)^j x^j \\
&\leq \sum_{j=0}^n\binom{n}{j}(Q-1)^j x^j 
= (1+(Q-1)x)^n.
\end{split}
\end{equation}
Next, we want to show monotonicity of $\cV_Q(n,r)/Q^{n-k}$ in the
alphabet size $Q$. To that end, treat $Q$ as a real number, $Q>1$, and define $f_j(Q)\eqdef(Q-1)^j/Q^{n-k}$. We observe that
\[
\frac{f'_j(Q)}{f_j(Q)}=\frac{j}{Q-1}-\frac{n-k}{Q} = \frac{(n-k)-(n-k-j)Q}{Q(Q-1)}.
\]
This is negative whenever $(n-k-j)Q>n-k$, and since the denominator is positive, that means $f_j(Q)$ is decreasing. If $(n-k-j)Q>n-k$ for all $0\leq j\leq r$, then $\cV_Q(n,r)/Q^{n-k} = \sum_{j=0}^r \binom{n}{j} f_j(Q)$ is decreasing in the alphabet size $Q$.

We now state and prove the main theorem, showing that codes of rate at most $3/5$ satisfy the conjecture for all $1\leq t\leq \min\set{k,n-k}$.

\begin{theorem}
\label{th:rate35}
  Let $\cC$ be an $[n,k]_q$ code. If $k/n \leq 3/5$, then for every $1\leq t\leq \min\set{k,n-k}$,
  \[ \delta_t(\cC)\leq R_t(\cC).\]
\end{theorem}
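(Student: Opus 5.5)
We reduce to the remaining cases by the earlier results. For $t=1$ the claim is classical, and for $t=2$ it is Theorem~\ref{th:t2}. So assume $t\geq 3$. If $k\leq 5t-2$ we are done by Lemma~\ref{lem:step5t}, and if $n-k\leq t+4$ by Lemma~\ref{lem:hight}. Hence it remains to treat $3\leq t$, $k\geq 5t-1$, $n-k\geq t+5$, together with the rate condition $k\leq 3n/5$, i.e., $n-k\geq \tfrac{2}{3}k$. As in Lemma~\ref{lem:step5t}, the Singleton bound (Lemma~\ref{lem:knownsingleton}) gives $\delta_t(\cC)\leq r+1$ with $r\eqdef\floor{(n-k+t-3)/2}$. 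By~\eqref{eq:nec}, it then suffices to show the volume inequality
\[
\cV_{q^t}(n,r)<q^{t(n-k)} .
\]

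\textbf{Step 1: reduce to the smallest alphabet.} We apply the monotonicity remark following~\eqref{eq:generating}, with $Q=q^t$. The hypothesis $(n-k-j)Q>n-k$ for $j\leq r$ requires $(n-k-r)Q>n-k$. Here $n-k-r\geq (n-k-t+3)/2$, and since $t\leq (n-k)/\ldots$ is small relative to $n-k$ in this regime, this follows from $Q\geq 8$; the only work is checking it. Consequently $\cV_Q(n,r)/Q^{n-k}$ is decreasing in $Q$, so it is enough to prove the inequality for $Q=2^t$, or even for the smallest admissible $Q=8$ after rewriting in terms of $Q$. One must be careful here, because $r$ depends on $t$. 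I would rather fix $t$ and $q=2$, and handle the dependence on $t$ in Step 2.

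\textbf{Step 2: generating-function bound.} By~\eqref{eq:generating}, for every $0<x<1$,
\[
\cV_Q(n,r)\leq x^{-r}(1+(Q-1)x)^n .
\]
So we need
\[
n\log_Q(1+(Q-1)x)+r\log_Q(1/x) < n-k .
\]
Write $m=n-k$, and use $n\leq \tfrac{5}{2}m$ (from the rate bound) and $r\leq (m+t-3)/2$. The worst case is the largest $t$ allowed, namely $t\leq (k+1)/5\leq \tfrac{3m}{10}$ roughly (from $k\leq \tfrac{3}{2}m$ and $k\geq 5t-1$), which gives $r\lesssim 0.65m$. Choosing something like $x=Q^{-1/2}$ or an optimized constant, and dividing by $m$, the requirement becomes an inequality of the form
\[
\tfrac{5}{2}\log_Q(1+(Q-1)x)+\rho\log_Q(1/x)<1
\]
with $\rho=r/m$. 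This must be verified for $Q=2^t$, $t\geq 3$. Since the left side decreases in $Q$ for a suitable choice of $x$, checking $Q=8$ suffices. The constants are tight; the number $3/5$ presumably comes exactly from where this fails.

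\textbf{Main obstacle.} The hard part is making the ratio $r/m$ and the choice of $x$ work uniformly at $Q=8$, including the additive $t-3$ term in $r$ and the floor. If a single $x$ does not suffice, I would split into two regimes. For large $t/m$, I would exploit $k\geq 5t-1$ less crudely. Otherwise I would fall back to an induction on $n-k$ exactly as in Lemma~\ref{lem:step5t}, with base cases verified by the $\binom{n}{r}q^{tr}$ bound~\eqref{eq:upball}, since the two induction steps there only used~\eqref{eq:need2}, and that condition holds comfortably when $n\leq \tfrac{5}{2}m$.
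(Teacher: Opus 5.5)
Your proposal has a genuine gap in the case $t=3$, $q=2$. Your whole argument reduces the theorem to the single inequality $\cV_{q^t}(n,r)<q^{t(n-k)}$ with $r=\floor{(n-k+t-3)/2}$, that is, to the generalized Singleton bound combined with sphere covering. For $t=3$, $q=2$ this inequality is false in the range you still have to cover. Take $n=50$, $k=30$: the rate is exactly $3/5$, $k\geq 14=5t-1$, and $n-k=20\geq t+5$. Then $r=10$ and
\[
\cV_{8}(50,10)\geq \binom{50}{10}7^{10}\approx 2.9\cdot 10^{18} > 2^{60}\approx 1.15\cdot 10^{18}.
\]
Asymptotically, with $n=\frac{5}{2}(n-k)$ and $r/n\to 1/5$, one has $\frac{5}{2}H_8(1/5)\approx 1.07>1$, so the failure gets worse as $n-k$ grows.

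So no choice of $x$ in \eqref{eq:generating}, no splitting into regimes, and no induction can rescue Step~2, because the target itself is wrong. The fallback induction would not help in any case. The steps of Lemma~\ref{lem:step5t} increase $n$ with $k$ fixed, so its base cases would have to lie on the rate-$3/5$ boundary, which is exactly where the inequality fails. Separately, pairing the largest ratio $\rho\approx 0.65$ with the smallest alphabet $Q=8$ is not legitimate, since these come from different values of $t$. The order $t$ has to be handled case by case.

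The missing idea is an upper bound on $d_3(\cC)$ sharper than Singleton's $n-k+3$. The paper gets it from the Tsfasman--Vladut bound, Lemma~\ref{lem:knownvla}, with $r=3$, $s=k$, $\ell=k-h$ and $d_k(\cC)\leq n$. For $h=6$ this gives $d_3(\cC)\leq\floor{\frac{8}{9}(n-k+6)}$. The radius that must be exceeded then drops to $r^*=\floor{(d_3(\cC)-3)/2}\leq \frac{4}{9}(n-k)+\frac{7}{6}$. With this radius, the generating-function bound with $x=1/32$ works once $n-k\geq 91$. The remaining cases $10\leq n-k\leq 90$ are checked directly, using the minimum of the bounds for $h\in\{3,4,5,6\}$.

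The other cases of your plan can be completed along the paper's lines:
\begin{itemize}
\item $t\geq 6$, via $\cV_{q^t}(n,r)<2^n q^{tr}$ and $n+tr<t(n-k)$;
\item $t=4,5$, and $t=3$ with $q\geq 3$, via \eqref{eq:generating} at the smallest $q$ followed by monotonicity in $Q$.
\end{itemize}
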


\begin{proof}
For $t=1$ the claim is already known, and for $t=2$ it is solved in Theorem~\ref{th:t2}, so we assume $t\geq 3$. By Lemma~\ref{lem:step5t}, the claim holds when $k\leq 5t-2$. Thus, in the remainder of the proof we assume $k\geq 5t-1$. By our requirement that $k/n \leq 3/5$ this means $k\leq 3(n-k)/2$, so
\begin{equation}
\label{eq:nkreq}
n-k\geq \frac{2(5t-1)}{3}.
\end{equation}
The proof proceeds by using the ball-covering bound. We define $r_{n-k}$ as in the proof of Lemma~\ref{lem:step5t}, and so to prove our claim it suffices to show~\eqref{eq:need} holds. We distinguish between cases depending on the value of $t$.

\textbf{Case 1:} $t\geq 6$. We observe
\[ n+t r_{n-k} \leq \frac{5(n-k)}{2}+\frac{t(n-k+t-3)}{2}= t(n-k)-\frac{(t-5)(n-k)-t(t-3)}{2}.\]
We contend that in this range, together with~\eqref{eq:nkreq},
\[
(t-5)(n-k)-t(t-3) \geq (t-5)\frac{2(5t-1)}{3}-t(t-3)=\frac{7t^2-43t+10}{3}>0
\]
for all $t\geq 6$. Thus,
\[ n+t r_{n-k} < t(n-k).\]
But now,
\begin{align*}
\cV_{q^t}(n,r_{n-k}) = \sum_{i=0}^{r_{n-k}} \binom{n}{i}(q^t-1)^i < 2^n q^{t r_{n-k}} \leq q^{n+tr_{n-k}} < q^{t(n-k),}
\end{align*}
and~\eqref{eq:need} holds, so $\delta_t(\cC)\leq R_t(\cC)$.

\textbf{Case 2:} $t=4,5$.
These cases require a slightly
more nuanced approach for bounding $\cV_{q^t}(n,r_{n-k})$. By~\eqref{eq:generating}, we choose $x=\frac{1}{4(q^t-1)}$, and obtain
\[ \cV_{q^t}(n,r_{n-k}) \leq (4(q^t-1))^{r_{n-k}}\parenv*{\frac{5}{4}}^n.\]
Squaring this, and dividing by $q^{2t(n-k)}$ we get
\[
\parenv*{\frac{\cV_{q^t}(n,r_{n-k})}{q^{t(n-k)}}}^2
\leq
\frac{(4(q^t-1))^{2r_{n-k}}(5/4)^{2n}}{q^{2t(n-k)}}.
\]
We now use the fact that $2r_{n-k}\leq n-k+t-3$ and $2n\leq 5(n-k)$,
\begin{equation}
\label{eq:genq}
\parenv*{\frac{\cV_{q^t}(n,r_{n-k})}{q^{t(n-k)}}}^2
\leq
(4(q^t-1))^{t-3} \parenv*{\frac{(5/4)^5 4(q^t-1)}{q^{2t}}}^{n-k}.
\end{equation}

Let us first see what happens for $q=2$. When $t=4$, \eqref{eq:genq} becomes
\begin{align*}
\parenv*{\frac{\cV_{2^4}(n,r_{n-k})}{2^{4(n-k)}}}^2
&\leq
(4(2^4-1))^{4-3} \parenv*{\frac{(5/4)^5 4(2^4-1)}{2^{2\cdot 4}}}^{n-k} \\
&= 60 \parenv*{\frac{46875}{65536}}^{n-k} \overset{(a)}{\leq} 60 \parenv*{\frac{46875}{65536}}^{13} < 1,
\end{align*}
where $(a)$ follows from~\eqref{eq:nkreq}. We aim to use the monotonicity in $Q$ discussed before stating Theorem~\ref{th:rate35}. To that end, assume $j\leq r_{n-k} = \floor{\frac{n-k+t-3}{2}}\leq \frac{n-k+1}{2}$, since $t=4$. Additionally, $Q\geq 2^4$ in this case. Then
\[
(n-k-j)Q \geq 8(n-k-1) > n-k,
\]
where the last inequality follows using~\eqref{eq:nkreq}. Thus, for $t=4$, for all $q\geq 2$
\[
\parenv*{\frac{\cV_{q^4}(n,r_{n-k})}{q^{4(n-k)}}}^2 \leq
\parenv*{\frac{\cV_{2^4}(n,r_{n-k})}{2^{4(n-k)}}}^2 < 1.
\]
Hence, \eqref{eq:need} holds and $\delta_4(\cC)\leq R_4(\cC)$.

For $t=5$ the proof is essentially the same. We take $q=2$ first,
\begin{align*}
\parenv*{\frac{\cV_{2^5}(n,r_{n-k})}{2^{5(n-k)}}}^2
&\leq
(5(2^5-1))^{5-3} \parenv*{\frac{(5/4)^5 4(2^5-1)}{2^{2\cdot 5}}}^{n-k} \\
&\leq  24025 \parenv*{\frac{96875}{262144}}^{16}< 1.
\end{align*}
Monotonicity in the alphabet size is similar, since we now have $j\leq \frac{n-k+2}{2}$, and then
\[
(n-k-j)Q \geq 16(n-k-2) > n-k. 
\]
Thus,
\[
\parenv*{\frac{\cV_{q^5}(n,r_{n-k})}{q^{5(n-k)}}}^2 \leq
\parenv*{\frac{\cV_{2^5}(n,r_{n-k})}{2^{5(n-k)}}}^2 < 1,
\]
and therefore $\delta_5(\cC)\leq R_5(\cC)$.

\textbf{Case 3:} $t=3$. This case is further subdivided into $q\geq 3$ and $q=2$.

We start with $t=3$ and $q\geq 3$, which proceeds along the same lines as the case $t=4$. We have $r_{n-k}=\floor{\frac{n-k}{2}}$. Taking $q=3$,
\begin{align*}
\parenv*{\frac{\cV_{3^3}(n,r_{n-k})}{3^{3(n-k)}}}^2
&\leq
\parenv*{\frac{(5/4)^5 4(3^3-1)}{3^{2\cdot 3}}}^{n-k}
\leq \parenv*{\frac{40625}{93312}}^{10} < 1.
\end{align*}
For monotonicity, $Q\geq 3^3=27$, and for $j\leq \frac{n-k}{2}$ we have
\[
(n-k-j)Q \geq 27\frac{n-k}{2} > n-k,
\]
so
\[
\parenv*{\frac{\cV_{q^3}(n,r_{n-k})}{q^{3(n-k)}}}^2 \leq
\parenv*{\frac{\cV_{3^3}(n,r_{n-k})}{3^{3(n-k)}}}^2 < 1,
\]
and therefore $\delta_3(\cC)\leq R_3(\cC)$ for $q\geq 3$.

Our final case is $q=2$. Our previous strategy is not immediately helpful since the fraction inside the parentheses on the RHS of~\eqref{eq:genq} is greater than $1$. We therefore make the following steps. We first recall that $k\geq 5t-1 = 14$. Let us get a better handle on $d_3(\cC)$: by using Lemma~\ref{lem:knownvla}, and choosing $r=3$, $s=k$ (while remembering $d_k(\cC)\leq n$), $\ell=k-h$ and $3\leq h\leq k$, we get
\begin{equation}
\label{eq:d3opt}
d_3(\cC) \leq \floor*{\frac{2^h-2^{h-3}}{2^h-1}(n-k+h)}.
\end{equation}
We can choose any $3\leq h\leq 14$, and choosing $h=6$ will suffice, as we shall see. With this choice,
\[
d_3(\cC)\leq \floor*{\frac{8}{9}(n-k+6)}.
\]
To use the same line of argument as before, we choose
\[
r^*_{n-k} \eqdef \floor*{\frac{d_3(\cC)-3}{2}} \leq \frac{4(n-k)}{9}+\frac{7}{6}.
\]
We apply~\eqref{eq:generating} with $x=\frac{1}{32}$ to get
\begin{align*}
\cV_{2^3}(n,r^*_{n-k}) &\leq 32^{r^*_{n-k}} \parenv*{\frac{39}{32}}^n \leq 32^{4(n-k)/9 + 7/6} \parenv*{\frac{39}{32}}^{5(n-k)/2} \\
&= 2^{3(n-k)} \parenv*{2^{35/6}\parenv*{\frac{39^{5/2}}{2^{239/18}}}^{n-k}}.
\end{align*}
For $n-k\geq 91$, the expression in the parentheses on the RHS is strictly smaller than $1$, and hence, $\delta_3(\cC)\leq R_3(\cC)$.

We are only missing the cases $10\leq n-k\leq 90$. By our restriction on the rate $n/k\leq 3/5$, we have $n\leq \floor{\frac{5}{2}(n-k)}\eqdef N_{n-k}$. As for bounding $d_3(\cC)$, we can use the same argument resulting in~\eqref{eq:d3opt}, though we can choose $3\leq h\leq k$, where we have already seen that $k\geq 14$. It will suffice (as we shall see below) to choose $3\leq h\leq 6$, and obtain
\begin{align*}
& d_3(\cC) \\
&\quad\leq \min\set*{
n-k+3,
\floor*{\frac{14(n-k+4)}{15}},
\floor*{\frac{28(n-k+5)}{31}},
\floor*{\frac{8(n-k+6)}{9}}
} \\
&\quad \eqdef D_{n-k}.
\end{align*}
Finally, set
\[
r^*_{n-k} \eqdef \floor*{\frac{D_{n-k}-3}{2}}.
\]
If we can show
\[
\cV_{2^3}(N_{n-k},r^*_{n-k}) < 2^{3(n-k)},
\]
then by~\eqref{eq:nec} we have $\delta_3(\cC)\leq R_3(\cC)$. Since there are a finite number of cases for the parameter $n-k$, i.e., $10\leq n-k\leq 90$, these may be done by a finite (tedious) inspection which shows the requirement does hold.
\end{proof}

\section{Asymptotic Results}
\label{sec:asymptotic}

This section considers the conjecture from a different point of view. This time, we do not restrict the order $t$, nor the rate $k/n$. Instead, we restrict the length of the code. The main result in Theorem~\ref{th:asymptotic} shows that the conjecture holds for any rate $0<\alpha<1$, and order $1\leq t\leq \min\set{k,n-k}$, provided the code length $n\geq N_\alpha$, where $N_\alpha$ depends on $\alpha$ only.

Like in the previous section, reaching the main result requires a sequence of lemmas considering smaller cases. In Lemma~\ref{lem:highrad} we show the conjecture holds when the radius is high enough. Then, Lemma~\ref{lem:affball} upper bounds the generalized Hamming weight of a code. Lemma~\ref{lem:gap} is a purely technical lemma showing a gap exists between two functions that will be needed later. Lemma~\ref{lem:uniform} nearly proves the main claim, only the threshold does not depend only on the rate, but also on the alphabet size $q$ and the order $t$. All of these enable the proof of the main claim.

\begin{lemma}
\label{lem:highrad}
Let $\cC$ be an $[n,k]_q$ code, and let $1\leq t\leq \min\set{k,n-k}$. If $R_t(\cC)\geq 2(n-k)/3$ then
\[
\delta_t(\cC) \leq R_t(\cC).
\]
\end{lemma}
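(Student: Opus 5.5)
The plan is to set $a\eqdef n-k$ and split according to the size of $t$ relative to $a$. The generalized Singleton bound (Lemma~\ref{lem:knownsingleton}) handles small orders directly. For large orders I would fall back on Lemma~\ref{lem:Rtt} and Lemma~\ref{lem:hight}, together with a column-counting argument on a parity-check matrix in the spirit of the proof of Lemma~\ref{lem:Rtt}.

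\emph{Small orders.} By Lemma~\ref{lem:knownsingleton}, $\delta_t(\cC)\leq\floor{(a+t-1)/2}$. If $t-1\leq a/3$, then
\[
\delta_t(\cC)\leq \frac{a+a/3}{2}=\frac{2a}{3}\leq R_t(\cC),
\]
and we are done, since $R_t(\cC)$ is an integer and the hypothesis gives $R_t(\cC)\geq 2a/3$.

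\emph{Large orders, $t>a/3+1$.} If $t\geq a-4$, Lemma~\ref{lem:hight} already gives the claim, and if $R_t(\cC)=t$, Lemma~\ref{lem:Rtt} gives it. So I may assume $a/3+1<t<a-4$ and $R_t(\cC)\geq\max\set{t+1,\,2a/3}$. I would then aim to prove directly that $d_t(\cC)\leq 2R_t(\cC)+2$. For this I would use the syndrome description of $R_t$: with a parity-check matrix $H\in\F_q^{a\times n}$, every $t$-dimensional subspace of $\F_q^{a}$ lies in the span of at most $R_t(\cC)$ columns of $H$. On the other side, $d_t(\cC)$ is the smallest size of a set $J$ of columns with $\abs{J}-\rank(H_J)\geq t$.

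The construction I would try is as follows. Take a $t$-dimensional subspace $U$ and a set $J_1$ of at most $R_t(\cC)$ columns whose span contains $U$. Then take a second $t$-dimensional subspace $U'$ chosen inside $\rowsp$-complementary directions together with $\rowsp(H_{J_1})$, and cover it by a set $J_2$ of at most $R_t(\cC)$ columns. The union $J=J_1\cup J_2$ has $\abs{J}\leq 2R_t(\cC)$ and $\rank(H_J)\leq a$. The condition $t>a/3$ together with $R_t(\cC)\geq 2a/3$ should make $\abs{J}-\rank H_J\geq t$ achievable after trimming $J$ down to exactly $\rank H_J+t$ columns. As a fallback for this case, I would apply Lemma~\ref{lem:knownvla} with $s=k$, $d_k(\cC)\leq n$, and a suitable $\ell$ to sharpen $d_t$ below $a+t$.

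The main obstacle is the large-order case. There the Singleton bound gives $\delta_t$ up to roughly $(a+t)/2$, which can exceed $\max\set{t,2a/3}$. So an honest structural inequality of the form $d_t\leq 2R_t+O(1)$, or a sharper bound on $d_t$, is genuinely needed. Making the two-cover construction produce enough linear dependencies (that is, controlling $\rank H_J$) is the step I expect to require the most care.
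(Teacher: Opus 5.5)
Your reductions are correct. For $t-1\leq (n-k)/3$ the generalized Singleton bound gives $\delta_t(\cC)\leq 2(n-k)/3\leq R_t(\cC)$, and Lemmas~\ref{lem:hight} and~\ref{lem:Rtt} dispose of $t\geq n-k-4$ and of $R_t(\cC)=t$. The remaining range, $(n-k)/3+1<t<n-k-4$ with $R_t(\cC)\geq\max\set{t+1,2(n-k)/3}$, is the entire content of the lemma, and there you have no working argument.

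The two-cover construction cannot create the needed dependencies. A set of at most $R_t(\cC)$ columns whose span contains $U$ can always be pruned to a linearly independent one. For independent $J_1,J_2$ one then has $\abs{J}-\rank H_J=\dim\bigl(\mathrm{span}\,H_{J_1}\cap\mathrm{span}\,H_{J_2}\bigr)-\abs{J_1\cap J_2}$. The covering property gives no lower bound on this quantity; nothing stops $J_2$ from simply reusing columns of $J_1$. The crude bounds $\abs{J}\leq 2R_t(\cC)$ and $\rank H_J\leq n-k$ yield $\abs{J}-\rank H_J\geq t$ only if $2R_t(\cC)\geq n-k+t$, which is exactly when Singleton already suffices. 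Trimming never increases $\abs{J}-\rank H_J$, so it does not help either. In effect you are trying to prove $d_t(\cC)\leq 2R_t(\cC)+2$ structurally for arbitrary codes, which is the open conjecture, and the hypothesis $R_t(\cC)\geq 2(n-k)/3$ plays no real role.

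The fallback also fails. Lemma~\ref{lem:knownvla} with $r=t$, $s=k$, $\ell=k-h$ gives $d_t(\cC)\leq\frac{q^h-q^{h-t}}{q^h-1}(n-k+h)$. This bound is at least $(1-q^{-t})(n-k+h)>n-k+t-1$ for every $h\geq t$ in your range, because there $n-k\leq 3t-4$ and $t\geq 5$, so $q^t>4t$. Hence it improves on Singleton by at most $1$. You would need to reach $2R_t(\cC)+2$, which may be smaller than $n-k+t$ by order $n-k$, for example when $t\approx (n-k)/2$.

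What is missing is a counting argument in which $R_t(\cC)\geq 2(n-k)/3$ is used quantitatively; this is how the paper proceeds.
\begin{enumerate}
\item Assume $t\geq 3$, $R_t(\cC)\geq t+1$, and suppose for contradiction that $d_t(\cC)\geq 2R_t(\cC)+3$. Put $\ell=2R_t(\cC)+2-t$; Singleton gives $\ell\leq n-k-1$, and also $\ell\geq 7$.
\item No $\ell$-dimensional subspace of $\F_q^{n-k}$ can contain $2R_t(\cC)+2$ columns of $H$. Such columns would carry at least $t$ dependencies and give a $t$-dimensional subcode of support at most $2R_t(\cC)+2$.
\item Averaging over all $\ell$-dimensional subspaces then gives the upper bound $n\leq (2R_t(\cC)+1)\frac{q^{n-k}-1}{q^{\ell}-1}$.
\item Against this, the sphere-covering bound gives $q^{t(n-k-R_t(\cC))}\leq\binom{n}{R_t(\cC)}$, which forces $n$ to be large.
\item Combining the two via Stirling yields $q^{E}<2e/(1-q^{-\ell})$ with $E=2+(R_t(\cC)-t)\bigl(2-\frac{n-k}{R_t(\cC)}\bigr)$.
\item Here $E\geq 5/2$ precisely because $R_t(\cC)\geq 2(n-k)/3$ and $R_t(\cC)\geq t+1$. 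Since $2^{5/2}>2e/(1-2^{-7})$, this is a contradiction.
\end{enumerate}
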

\begin{proof}
If $t=1$ then the claim is the classic packing vs.~covering radii. For $t=2$ the claim holds by Theorem~\ref{th:t2}. Assume $t\geq 3$. If $R_t(\cC)=t$, then the claim holds by Lemma~\ref{lem:Rtt}. Let us therefore assume $R_t(\cC)\geq t+1$. Assume to the contrary that 
\begin{equation}
    \label{eq:contrar}
d_t(\cC)\geq 2R_t(\cC)+3.
\end{equation}
Define
\begin{align*}
\ell &\eqdef 2R_t(\cC)+2-t, &
m &\eqdef n-k-\ell.
\end{align*}
By~\eqref{eq:contrar} and Lemma~\ref{lem:knownsingleton}, $\ell\leq n-k-1$, and thus $m\geq 1$. Additionally, $\ell\geq R_t(\cC)+3\geq t+1+3\geq 7$.

Let $H\in\F_q^{(n-k)\times n}$ be a parity-check matrix for $\cC$. We contend that every $\ell$-dimensional subspace of $\F_q^{n-k}$ contains at most $2R_t(\cC)+1$ columns of $H$ (counted with multiplicity). Indeed, if $2R_t(\cC)+2$ such columns are present, we can form them into a matrix $H'$ of size $(n-k)\times (2R_t(\cC)+2)$ and whose kernel has dimension at least $2R_t(\cC)+2-\ell = t$. This implies a $t$-dimensional subcode of $\cC$ whose support is at most $2R_t(\cC)+2$, contradicting~\eqref{eq:contrar}.

Every vector in $\F_q^{n-k}\setminus\set{0}$ appears in a $\frac{q^\ell-1}{q^{n-k}-1}$-fraction of the $\ell$-dimensional subspaces mentioned above. Thus, each such subspaces contributes at most $2R_t(\cC)+1$ columns to $H$, and thus
\begin{equation}
\label{eq:upn}
n \leq (2R_t(\cC)+1) \frac{q^{n-k}-1}{q^{\ell}-1} < (2R_t(\cC)+1)\frac{q^m}{1-q^{-\ell}}.
\end{equation}

Recalling~\eqref{eq:downball} and~\eqref{eq:upball}, we have
\[
q^{t(n-k)} \leq \cV_{q^t}(n,R_t(\cC)) \leq \binom{n}{R_t(\cC)}q^{tR_t(\cC)}.
\]
Comparing both sides gives
\begin{equation}
\label{eq:temp1}
q^{t(n-k-R_t(\cC))} \leq \binom{n}{R_t(\cC)}.
\end{equation}
Using Stirling's approximation (e.g., see~\cite[p.~309, Eq.~(17)]{MacSlo78}), for all $r\geq 1$,
\[
r! > \sqrt{2\pi r} \parenv*{\frac{r}{e}}^r \geq \parenv*{\frac{r}{e}}^r e^{1/2} > \parenv*{\frac{r}{e}}^r \parenv*{1+\frac{1}{2r}}^r = \parenv*{\frac{r+\frac{1}{2}}{e}}^r.
\]
Thus,
\begin{align*}
\binom{n}{R_t(\cC)} &= \frac{n(n-1)\dots (n-R_t(\cC)+1)}{R_t(\cC)!} \\
&< \parenv*{\frac{en}{R_t(\cC)+\frac{1}{2}}}^{R_t(\cC)} < \parenv*{\frac{2eq^m}{1-q^{-\ell}}}^{R_t(\cC)},
\end{align*}
where the last inequality uses~\eqref{eq:upn}. Together with~\eqref{eq:temp1} we obtain
\begin{equation}
\label{eq:temp2}
q^{\frac{t(n-k-R_t(\cC))}{R_t(\cC)}-m} < \frac{2e}{1-q^{-\ell}}.
\end{equation}
Checking the exponent on the LHS,
\begin{align*}
\frac{t(n-k-R_t(\cC))}{R_t(\cC)}-m &= 2+(R_t(\cC)-t)\parenv*{2-\frac{n-k}{R_t(\cC)}} \\
&\overset{(a)}{\geq} 2 +\frac{R_t(\cC)-t}{2} \overset{(b)}{\geq} \frac{5}{2},
\end{align*}
where $(a)$ follows from the requirement that $R_t(\cC)\geq 2(n-k)/3$, and $(b)$ follows from $R_t(\cC)\geq t+1$.
But now, using $q\geq 2$ and $\ell\geq 7$,
\[
q^{\frac{t(n-k-R_t(\cC))}{R_t(\cC)}-m} \geq 2^{5/2}> \frac{2e}{1-2^{-7}} \geq \frac{2e}{1-q^{-\ell}},
\]
we contradict~\eqref{eq:temp2}. Then $d_t(\cC)\leq 2R_t(\cC)+2$ and the claim is proved.
\end{proof}

\begin{corollary}
\label{cor:log}
Let $\cC$ be an $[n,k]_q$ code, and let $1\leq t\leq \min\set{k,n-k}$. If $t(n-k)\log_2 q \geq 3n$, then
\[
\delta_t(\cC)\leq R_t(\cC).
\]
\end{corollary}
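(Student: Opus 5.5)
The plan is to reduce Corollary~\ref{cor:log} to Lemma~\ref{lem:highrad}: it suffices to show that $t(n-k)\log_2 q\geq 3n$ forces $R_t(\cC)\geq 2(n-k)/3$. To do this we use the ball-covering bound~\eqref{eq:downball} and bound the ball volume from above crudely, so that a radius below $2(n-k)/3$ would give a ball too small to cover $\F_{q^t}$-space.

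\textbf{Bounding the ball volume.} Write $r=R_t(\cC)$. From~\eqref{eq:downball} and~\eqref{eq:upball},
\[
q^{t(n-k)} \leq \cV_{q^t}(n,r)\leq \binom{n}{r}q^{tr} < 2^n q^{tr}.
\]
The first inequality is~\eqref{eq:downball} and the second is~\eqref{eq:upball}. For the last step we use $\binom{n}{r}<2^n$, which holds because $r\leq n-k<n$. Alternatively, one can argue directly that $\cV_{q^t}(n,r)=\sum_{i\leq r}\binom{n}{i}(q^t-1)^i< 2^n q^{tr}$.

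\textbf{Taking logarithms.} Taking $\log_2$ gives
\[
t(n-k)\log_2 q < n + tr\log_2 q .
\]
The hypothesis gives $n\leq \tfrac13 t(n-k)\log_2 q$. Substituting, we get $t(n-k)\log_2 q\cdot\tfrac23 < tr\log_2 q$. Dividing by $t\log_2 q>0$ yields $r> 2(n-k)/3$.

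\textbf{Conclusion.} Lemma~\ref{lem:highrad} now applies and gives $\delta_t(\cC)\leq R_t(\cC)$.

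The only step needing care is the volume bound: we must make sure the estimate $\cV_{q^t}(n,r)<2^nq^{tr}$ is strict, or at least sufficient. Even a non-strict version gives $r\geq 2(n-k)/3$, which is exactly what Lemma~\ref{lem:highrad} requires, so no real obstacle is expected.
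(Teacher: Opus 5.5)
Your proposal is correct and matches the paper's proof. The paper likewise combines \eqref{eq:downball} with \eqref{eq:upball} and $\binom{n}{R_t(\cC)}\leq 2^n$ to get $R_t(\cC)\geq n-k-\frac{n\log_q 2}{t}\geq \frac{2(n-k)}{3}$, and then invokes Lemma~\ref{lem:highrad}. Your strict inequality is harmless but unnecessary, and $\binom{n}{r}<2^n$ holds for every $0\leq r\leq n$ once $n\geq 1$, so you do not need $R_t(\cC)\leq n-k$ to justify it.
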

\begin{proof}
By~\eqref{eq:downball} and~\eqref{eq:upball},
\[
q^{t(n-k)} \leq \cV_{q^t}(n,R_t(\cC)) \leq \binom{n}{R_t(\cC)}q^{tR_t(\cC)}\leq 2^n q^{tR_t(\cC)}.
\]
Thus,
\[
R_t(\cC) \geq n-k-\frac{n\log_q 2}{t} \geq \frac{2(n-k)}{3},
\]
where the last inequality follows from the requirement. We use Lemma~\ref{lem:highrad} to complete the proof.
\end{proof}

The next lemma shows a straightforward upper bound on the generalized Hamming weight of a code.

\begin{lemma}
\label{lem:affball}
Let $\cC$ be an $[n,k]_q$ code. If $1\leq t\leq k$ and $0\leq r\leq n$ such that $\cV_q(n,r)> q^{n-k+t-1}$, then $d_t(\cC)\leq (t+1)r$.
\end{lemma}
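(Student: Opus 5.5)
We prove the statement by a pigeonhole argument on syndromes, building $t$ linearly independent codewords of weight at most $r$ inside a single translate of the code.

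Let $H\in\F_q^{(n-k)\times n}$ be a parity-check matrix of $\cC$. Consider the syndrome map $v\mapsto Hv^{\intercal}$ restricted to the Hamming ball $\cB_{q,r}(0)$. It takes values in $\F_q^{n-k}$, which splits into $q^{n-k}$ cosets. By pigeonhole, some syndrome $s$ is attained by more than $q^{t-1}$ vectors of the ball, since $\cV_q(n,r)>q^{n-k}\cdot q^{t-1}$. So there is a set $S\subseteq\cB_{q,r}(0)$ with $\abs{S}\geq q^{t-1}+1$, all of whose elements lie in the same coset $u+\cC$.

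Fix $v_0\in S$ and consider the differences $c_v\eqdef v-v_0$ for $v\in S$. Each $c_v$ is a codeword, and there are more than $q^{t-1}$ distinct ones (including $0$). Any linear space of dimension at most $t-1$ has at most $q^{t-1}$ elements, so the span $D$ of $\set{c_v : v\in S}$ has dimension at least $t$. Choose $t$ linearly independent vectors $c_{v_1},\dots,c_{v_t}$ from this set, and stack them as the rows of a matrix $C\in\cC^t$ of rank $t$.

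It remains to bound the support. Each row satisfies $\supp(c_{v_i})\subseteq\supp(v_i)\cup\supp(v_0)$, so
\[
\supp(C)\subseteq \supp(v_0)\cup\bigcup_{i=1}^t\supp(v_i).
\]
This union has size at most $(t+1)r$, because each of the $t+1$ vectors has weight at most $r$. By~\eqref{eq:defdt}, $d_t(\cC)\leq\wt(C)\leq (t+1)r$.

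The step needing care is the dimension count. We must ensure that more than $q^{t-1}$ distinct codewords span a space of dimension at least $t$. This holds because the vectors $v\in S$ are distinct, so their differences with the fixed $v_0$ are distinct. No other obstacle is expected; the condition $t\leq k$ only guarantees that $d_t$ is defined, and the counting already forces $\dim\cC\geq t$.
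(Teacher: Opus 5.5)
Your proposal is correct and is essentially the paper's proof. The paper averages $\abs{\cC\cap\cB_{q,r}(v)}$ over all centers $v$ to find a ball containing more than $q^{t-1}$ codewords; you instead pigeonhole the fixed ball $\cB_{q,r}(0)$ over the cosets of $\cC$, which is the same count after translation, since $\abs{\cB_{q,r}(0)\cap(u+\cC)}=\abs{\cB_{q,r}(-u)\cap\cC}$, and your dimension count and support bound then match the paper's affine-independence step.
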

\begin{proof}
Let us denote by $\cB_{q,r}(v)$ the Hamming ball of radius $r$ centered around $v\in\F_q^n$, i.e.,
\[
\cB_{q,r} \eqdef \set*{ v'\in \F_q^n : d(v,v')\leq r}.
\]
Simple averaging gives us
\[
\frac{1}{q^n} \sum_{v\in\F_q^n} \abs*{ \cC \cap \cB_{q,r}(v)} = q^{-(n-k)} \cV_q(n,r) > q^{t-1}.
\]
Thus, there exists some ball $\cB_{q,r}(v)$ that contains $t+1$ affinely independent vectors $c_0,\dots,c_t\in\cB_{q,r}(v)\cap \cC$, i.e., $c_1-c_0,c_2-c_0,\dots,c_t-c_0$ are linearly independent. We now observe that
\[
\bigcup_{i=1}^t \supp(c_i-c_0) \subseteq \bigcup_{i=0}^t \supp(c_i-v),
\]
and so the span of $c_1-c_0,\dots,c_t-c_0$ is a $t$-dimensional subcode of $\cC$ with support of size at most $(t+1)r$.
\end{proof}

For what follows, define
\[
\xi_{q,t}(s) \eqdef \min\set*{s,(t+1)H_q^{-1}(s)}.
\]
We shall need to show a gap between $\xi_{q,t}(s)$ and $H_{q^t}^{-1}(s)$. This is proved by the following technical lemma.

\begin{lemma}
\label{lem:gap}
For every $q\geq 2$, $t\geq 3$, and real number $0< s\leq 1$,
\[
\frac{1}{2}\xi_{q,t}(s) < H_{q^t}^{-1}(s).
\]
\end{lemma}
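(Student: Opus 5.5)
The plan is to substitute $y\eqdef H_{q^t}^{-1}(s)$, so that $s=H_{q^t}(y)$ with $0<y\leq 1-q^{-t}$. The claim $\frac12\xi_{q,t}(s)<y$ says that at least one of the two terms in the minimum is below $2y$. It therefore suffices to show, for every $y\in(0,1-q^{-t}]$, that
\[
H_{q^t}(y)<2y \qquad\text{or}\qquad (t+1)H_q^{-1}\bigl(H_{q^t}(y)\bigr)<2y .
\]
Set $u\eqdef 2y/(t+1)$. Since $t\geq 3$ we have $u\leq 1/2\leq 1-q^{-1}$, so $u$ lies in the domain where $H_q$ is increasing. The second alternative is then equivalent to $H_{q^t}(y)<H_q(u)$. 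So the task reduces to showing that for every $y$ one of the following holds:
\[
H_{q^t}(y)<2y \qquad\text{or}\qquad H_{q^t}(y)<H_q\!\left(\tfrac{2y}{t+1}\right).
\]

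\textbf{Step 1 (large $y$).} The function $g(y)\eqdef H_{q^t}(y)-2y$ is concave on $[0,1-q^{-t}]$. It satisfies $g(0)=0$ and $g'(0^+)=+\infty$, so $g>0$ near $0$. At the endpoint, $g(1-q^{-t})=1-2(1-q^{-t})<0$. Hence $g$ has a unique positive root $y^*$, and $H_{q^t}(y)<2y$ for all $y\in(y^*,1-q^{-t}]$. It then remains to handle $y\in(0,y^*]$, and I will bound $y^*$ explicitly. Writing $t\,H_{q^t}(y)=y\log_q(q^t-1)+h(y)/\log_2 q$, with $h$ the binary entropy, we get $H_{q^t}(y)\leq y+\frac{h(y)}{t\log_2 q}$. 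Thus $H_{q^t}(y)<2y$ as soon as $h(y)<t\,y\log_2 q$. This holds once $y$ exceeds a threshold $y_0(q,t)$, with $y_0\leq y_0(2,3)$, a small constant (the root of $h(y)=3y$, roughly $0.1$).

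\textbf{Step 2 (small $y$).} For $y\leq y_0$ I will compare the two entropies directly. Using $H_q(u)\geq u\log_q\frac{1}{u}+u\log_q(q-1)$ and the upper bound on $H_{q^t}$ above, it suffices to show
\[
\frac{2y}{t+1}\log_q\frac{t+1}{2y} > y\,\frac{\log_q(q^t-1)}{t}+\frac{h(y)}{t\log_2 q}.
\]
The heuristic is that the coefficient of $y\log(1/y)$ is $\frac{2}{t+1}$ on the left and only $\frac1t$ on the right, and $\frac{2}{t+1}>\frac1t$ whenever $t\geq2$. So for small $y$ the left side wins. I will divide by $y$, bound $h(y)\leq y\log_2(e/y)$, and reduce to an inequality of the form
\[
\Bigl(\tfrac{2}{t+1}-\tfrac1t\Bigr)\log_q\tfrac1y\;>\;1+O\!\left(\tfrac{\log_q(t+1)}{t}\right),
\]
which must hold on $(0,y_0]$.

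\textbf{Main obstacle.} The difficulty is making the two regimes overlap uniformly in $q$ and $t$. The gap coefficient $\frac{t-1}{t(t+1)}$ shrinks like $1/t$, while the threshold from Step 1 also improves with $t\log_2 q$. My first attempt will be to choose the crossover point as $y_1=q^{-c}$ for a suitable constant $c$, or more naively as $y_1=(t+1)^{-1}q^{-2}$. I would then check that Step 1 covers $y\geq y_1$ and Step 2 covers $y\leq y_1$, using monotonicity in $q$ and $t$ to reduce to the worst case $q=2$, $t=3$ plus an asymptotic estimate. If the crude bounds leave a gap for small parameters, I will patch the finitely many pairs $(q,t)$ by direct numerical evaluation of $y^*$ against the crossing point of $H_{q^t}(y)$ and $H_q(2y/(t+1))$, in the same spirit as the inspections done in Lemma~\ref{lem:step5t}.
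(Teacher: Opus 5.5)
Your reduction is correct: it suffices to show, for each $y\in(0,1-q^{-t}]$, that $H_{q^t}(y)<2y$ or $H_{q^t}(y)<H_q\bigl(\frac{2y}{t+1}\bigr)$. Step~1 is also fine, although the root of $h(y)=3y$ is about $0.29$, not $0.1$, and the exact threshold $y^*$ with $H_8(y^*)=2y^*$ is about $0.26$.

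The gap is Step~2. Step~1 only applies for $y>y^*$. With $Q=q^t$ one has $H_Q(1/Q)=1-\frac{Q-2}{Q}\log_Q(Q-1)>\frac{2}{Q}$, so $y^*>q^{-t}$. Hence Step~2 must hold up to $y\approx q^{-t}$, where $\log_q\frac1y\approx t$. There your advantage $\bigl(\frac{2}{t+1}-\frac1t\bigr)\log_q\frac1y\approx\frac{t-1}{t+1}$ is below the leading constant $1$ on your right-hand side, so this is not a ``small $y$'' regime. The terms you discard, $u\log_q(q-1)$ and $-(1-u)\log_q(1-u)$ (about $\frac{2}{t+1}(\log_q(q-1)+\log_q e)$ after dividing by $y$), are exactly what is needed. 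Concretely:
\begin{itemize}
\item For $(q,t)=(2,3)$, your reduced inequality holds only for $y<2^{-5.5}\approx 0.022$, leaving $[0.022,0.26]$ uncovered.
\item For $t=3$ and any $q$, it gives only $y<\frac{8}{e^2(q^3-1)^2}\approx q^{-6}$, while Step~1 needs $y>y^*>q^{-3}$.
\end{itemize}
So the bad pairs $(q,t)$ are infinitely many: every $q$ when $t=3$, and in general whenever $t$ is not large compared with $q$. There is no monotone reduction to $(2,3)$ and no finite numerical patch. Your alternative crossover $y_1=(t+1)^{-1}q^{-2}$ also fails outright. For $t=3$ and large $q$, $H_{q^3}(y_1)\approx\frac53 y_1>\frac32 y_1\approx H_q(y_1/2)$, so the second alternative is false on part of $(0,y_1]$.

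The missing idea is that Step~2 is only needed where Step~1 fails, i.e.\ where $2y\le s$ with $s=H_{q^t}(y)$, and you must use that condition. Set $\tau=\frac{t+1}{2}$ and $u=y/\tau$. The paper proves the exact inequality $tH_{q^t}(\tau u)<\tau H_q(u)+(t-1)\tau u$ whenever $\tau u<1$. It uses two facts:
\begin{itemize}
\item $-\frac{1-x}{x}\ln(1-x)$ is strictly decreasing, which gives $\phi(\tau u)<\phi(u)-\ln\tau$ for $\phi(x)=\frac{-x\ln x-(1-x)\ln(1-x)}{x}$;
\item $\frac{q^t-1}{q-1}\le 2q^{t-1}\le \tau q^{t-1}$.
\end{itemize}
Inserting $(t-1)y\le\frac{t-1}{2}s$ gives $ts<\tau H_q(u)+\frac{t-1}{2}s$, i.e.\ $s<H_q(u)$. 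This holds uniformly in $q$ and $t$, with no crossover bookkeeping. The paper splits instead by $s\le s_0$ versus $s\ge s_0$, where $H_q(x_0)=(t+1)x_0=s_0$, and handles $s\ge s_0$ by monotonicity of $H_{q^t}(x)/x$; the mechanism is the same.
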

\begin{proof}
For convenience, define $\tau\eqdef \frac{t+1}{2}$, so $\tau\geq 2$. Consider the function over $(0,1)$
\[
\phi(x) \eqdef \frac{-x\ln x - (1-x)\ln(1-x)}{x}.
\]
We note that its second part, $-((1-x)/x)\ln(1-x)$ is strictly decreasing (since its derivative is $(\ln(1-x)+x)/x^2<0$). Thus, if $\tau x < 1$, 
\[
\phi(\tau x) < \phi(x)-\ln \tau.
\]
Additionally,
\[
\frac{q^t-1}{q-1} = 1+q+\dots+q^{t-1} \leq 2q^{t-1} \leq \tau q^{t-1}.
\]
Combining the above, it now follows that
\begin{align*}
t H_{q^t}(\tau x) < \tau H_q(x) +\tau x \log_q\parenv*{\frac{q^t-1}{\tau(q-1)}} \leq \tau H_q(x) + \tau x(t-1).
\end{align*}
The ratio $H_q(x)/x$ is strictly decreasing in $(0,1-q^{-1})$. Hence, there exists a unique $x_0\in (0,1-q^{-1})$ such that
\[
H_q(x_0) = (t+1) x_0 \eqdef s_0.
\]
If $s\leq s_0$, and setting $x=H_q^{-1}(s)$, then $\tau x \leq \frac{s}{2}\leq \frac{1}{2}$ and $\frac{1}{2}\xi_{q,t}(s)=\tau x$. Therefore
\[
t H_{q^t}(\xi_{q,t}(s)/2)=tH_{q^t}(\tau x) < \tau s + \tau x(t-1) \leq \tau s+ \frac{s(t-1)}{2}=ts.
\]
If $s\geq s_0$ then $\xi_{q,t}(s)=s$. Since $H_{q^t}(x)/x$ is decreasing,
\[
\frac{H_{q^t}(\xi_{q,t}(s)/2)}{s}= \frac{H_{q^t}(s/2)}{s} \leq \frac{H_{q^t}(s_0/2)}{s_0} < 1.
\]
In both cases $H_{q^t}(\xi_{q,t}(s)/2)<s$, which is equivalent to the claim.
\end{proof}

The next lemma nearly reaches the desired goal. It shows the conjecture holds for all long-enough codes. However, the threshold function depends on the rate, the alphabet size, and the order.

\begin{lemma}
\label{lem:uniform}
For fixed $q\geq 2$, $t\geq 3$, and real $0<\beta <1$, there exists $N(q,t,\beta)$ such that every $[n,k]_q$ code $\cC$ with $n\geq N(q,t,\beta)$, rate $\frac{k}{n}\leq 1-\beta$, and $t\leq \min\set{k,n-k}$, satisfies 
\[
\delta_t(\cC)\leq R_t(\cC).
\]
\end{lemma}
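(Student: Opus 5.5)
The plan is to compare two exponential-scale estimates: an upper bound on $d_t(\cC)$ from Lemma~\ref{lem:affball}, and a lower bound on $R_t(\cC)$ from the ball-covering bound~\eqref{eq:nec}. Lemma~\ref{lem:gap} supplies the strict gap between them. Write $s\eqdef (n-k)/n\in[\beta,1)$ for the normalized redundancy.

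\textbf{Step 1: lower bound on $R_t(\cC)$.} Use the standard estimate $\cV_{q^t}(n,\rho n)\leq q^{tnH_{q^t}(\rho)}$, valid for $\rho\leq 1-q^{-t}$. If $tnH_{q^t}(\rho)<t(n-k)$, i.e.\ $\rho<H_{q^t}^{-1}(s)$, then~\eqref{eq:nec} gives $R_t(\cC)>\rho n$. So for any fixed $\epsilon>0$ we get $R_t(\cC)\geq (H_{q^t}^{-1}(s)-\epsilon)n$, with no condition on $n$.

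\textbf{Step 2: upper bound on $d_t(\cC)$.} By Lemma~\ref{lem:knownsingleton}, $d_t(\cC)\leq n-k+t=sn+t$. For the second bound apply Lemma~\ref{lem:affball} with $r=\lceil(H_q^{-1}(s')+\epsilon)n\rceil$, where $s'=s+(t-1)/n$. The standard lower bound $\cV_q(n,r)\geq \frac{1}{n+1}q^{nH_q(r/n)}$ then gives $\cV_q(n,r)>q^{n-k+t-1}$ once $n$ is large, with the threshold depending on $q,t,\epsilon$. Hence
\[
d_t(\cC)\leq (t+1)(H_q^{-1}(s)+2\epsilon)n+O(t).
\]
Together the two bounds give $d_t(\cC)\leq (\xi_{q,t}(s)+c\epsilon)n+O_{q,t}(1)$, with a constant $c$ depending on $t$. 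The edge case $r/n>1-q^{-1}$ is harmless there, since then the Singleton term is the minimum anyway.

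\textbf{Step 3: uniform gap and conclusion.} Let $g(s)\eqdef H_{q^t}^{-1}(s)-\frac{1}{2}\xi_{q,t}(s)$. This function is continuous on $[\beta,1]$ and, by Lemma~\ref{lem:gap}, strictly positive there, so it has a minimum $\gamma=\gamma(q,t,\beta)>0$. Choose $\epsilon$ so that $(c/2+1)\epsilon<\gamma/2$. Then
\[
\delta_t(\cC)\leq \tfrac{1}{2}d_t(\cC)\leq \bigl(\tfrac{1}{2}\xi_{q,t}(s)+\tfrac{c}{2}\epsilon\bigr)n+O(1) < \bigl(H_{q^t}^{-1}(s)-\epsilon\bigr)n\leq R_t(\cC)
\]
once $\gamma n/2$ exceeds the $O(1)$ terms. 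This defines $N(q,t,\beta)$.

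\textbf{Main obstacle.} The delicate part is keeping the error terms uniform in $s$ over $[\beta,1)$. There are two sources of trouble:
\begin{itemize}
\item The shift $s\mapsto s+(t-1)/n$ requires uniform continuity of $H_q^{-1}$ on a compact interval. This holds because $H_q^{-1}$ is continuous on $[0,1]$.
\item The domain restriction of $H_q^{-1}$ near $s=1$ must be handled. Here the Singleton term in $\xi_{q,t}$ takes over.
\end{itemize}
Keeping $\epsilon$ fixed, independent of $n$, before choosing $N$ is what makes the argument go through.
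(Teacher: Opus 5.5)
Your proposal is correct and follows essentially the paper's proof. It uses the same uniform gap $\gamma=\min_{s\in[\beta,1]}\bigl(H_{q^t}^{-1}(s)-\tfrac12\xi_{q,t}(s)\bigr)$ from Lemma~\ref{lem:gap}, and the same Singleton versus Lemma~\ref{lem:affball} upper bound on $d_t(\cC)$, split by which term of $\xi_{q,t}(s)$ is active; this split is equivalent to the paper's split at $s_0$. It also uses the same entropy form of the ball-covering bound for $R_t(\cC)$.

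The differences are minor. The paper needs no $\epsilon$-slack in the lower bound, since it gets $R_t(\cC)/n\geq H_{q^t}^{-1}(s)$ directly. It also absorbs the $t-1$ and $\log_q(n+1)$ losses through the concavity margin $\Delta=H_q(x_0+\zeta)-H_q(x_0)$, rather than through your shift $s'=s+(t-1)/n$ plus uniform continuity of $H_q^{-1}$. Your route still implicitly needs that kind of uniform lower bound on $H_q(x+\epsilon)-H_q(x)$ to make the threshold in Step~2 independent of $s$.
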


\begin{proof}
By Lemma~\ref{lem:gap}, the function $H_{q^t}^{-1}(s)-\frac{1}{2}\xi_{q,t}(s)$ is strictly positive, and by definition, continuous on the interval $[\beta,1]$. Thus, we can define
\[
\gamma \eqdef \min_{s\in [\beta,1]}\parenv*{H_{q^t}^{-1}(s)-\frac{1}{2}\xi_{q,t}(s)} > 0.
\]
As in the proof of Lemma~\ref{lem:gap}, let $x_0$ be the unique real number in $(0,1-q^{-1})$ such that $H_q(x_0)=(t+1)x_0$. Define
\[
\zeta \eqdef \min \set*{\frac{\gamma}{2(t+1)},\frac{1-q^{-1}-x_0}{2}}>0,
\]
as well as
\[
\Delta\eqdef H_q(x_0+\zeta)-H_q(x_0)>0,
\]
where positivity is deduced from the fact that $x_0+\zeta < 1-q^{-1}$, and the fact that $H_q(\cdot)$ is increasing in $(0,1-q^{-1})$.
We now contend that
\[
N(q,t,\beta)\eqdef \ceil*{\max\set*{\frac{1}{\zeta},\frac{4}{\Delta^2(\ln q)^2},\frac{2t}{\Delta}}}
\]
satisfies the claim.

Before proceeding with showing this, we first describe the relevant guarantees we can deduce from $n\geq N(q,t,\beta)$. First, 
\begin{equation}
\label{eq:g1}
\frac{1}{n}\leq \zeta.
\end{equation}
Second,
\begin{equation}
\label{eq:g2}
n\Delta - \log_q(n+1) \geq t > t-1.
\end{equation}
This is obtained by using $\ln(n+1)\leq \sqrt{n}$ for $n>0$. Thus,
\[
\log_q(n+1) \leq \frac{\sqrt{n}}{\ln q}\leq \frac{n\Delta}{2},
\]
with the last inequality derived using $n\geq \frac{4}{\Delta^2(\ln q)^2}$. Since we also have $n\geq 2t/\Delta$, 
\[
n\Delta-\log_q(n+1) \geq \frac{n\Delta}{2} \geq t.
\]

Let $\cC$ be a code, and write $s=\frac{n-k}{n}\in [\beta,1]$. We will now prove that
\begin{equation}
\label{eq:toprove}
\frac{d_t(\cC)}{2n} \leq H_{q^t}^{-1}(s)-\frac{\gamma}{2}.
\end{equation}
As in the proof of Lemma~\ref{lem:gap}, we write
\[
H_q(x_0)=(t+1)x_0 \eqdef s_0.
\]
We distinguish between two cases.

\textbf{Case 1:} $s \geq s_0$. We have
\begin{align*}
\frac{d_t(\cC)}{2n} &\overset{(a)}{\leq} \frac{s}{2}+\frac{t}{2n}
\overset{(b)}{\leq} \frac{\xi_{q,t}(s)}{2} + \frac{t\zeta}{2}
\overset{(c)}{\leq} \frac{\xi_{q,t}(s)}{2} + \frac{t\gamma}{4(t+1)}
< \frac{\xi_{q,t}(s)}{2} + \frac{\gamma}{4} \\
&\overset{(d)}{\leq} H_{q^t}^{-1}(s)-\frac{3\gamma}{4},
\end{align*}
where $(a)$ follows from Lemma~\ref{lem:knownsingleton}, $(b)$ follows from the definition of $\xi_{q,t}(s)$ and~\eqref{eq:g1}, $(c)$ follows from the definition of $\zeta$, and $(d)$ follows from the definition of $\gamma$. This chain implies~\eqref{eq:toprove}.

\textbf{Case 2:} $s<s_0$. Set $x=H_q^{-1}(s) < x_0$, as well as $r=\ceil{n(x+\zeta)}$. From~\eqref{eq:g1},
\begin{equation}
\label{eq:chain}
x+\zeta \leq \frac{r}{n} \leq x+\zeta+\frac{1}{n} \leq x_0+2\zeta \leq 1-\frac{1}{q}.
\end{equation}
In particular, $\frac{r}{n}\in (0,1-q^{-1})$. The $q$-ary entropy function is concave, and so $H_q(v+\zeta)-H_q(v)$ is decreasing in $v$. Since $x\leq x_0$, we obtain
\[
H_q(x+\zeta)-H_q(x) \geq H_q(x_0+\zeta)-H_q(x_0) = \Delta.
\]
Hence,
\[
H_q(r/n) \geq H_q(x+\zeta) \geq s+\Delta.
\]
Using a standard bound on the size of Hamming balls (e.g., \cite[Lemma 4.8]{Rot06}),
\begin{align*}
\cV_q(n,r) &\geq \frac{1}{n+1}q^{n H_q(r/n)}
\geq \frac{1}{n+1}q^{n-k+n\Delta} = q^{n-k+n\Delta-\log_q(n+1)} \\
&\geq q^{n-k+t} > q^{n-k+t-1},
\end{align*}
where the penultimate inequality follows from~\eqref{eq:g2}. Then
\begin{align*}
\frac{d_t(\cC)}{2n} &\overset{(a)}{\leq} \frac{(t+1)r}{2n}
\overset{(b)}{\leq} \frac{t+1}{2}\parenv*{x+\zeta+\frac{1}{n}}
\leq \frac{(t+1)x}{2} + (t+ 1)\zeta \\
& \overset{(c)}{\leq} \frac{\xi_{q,t}(s)}{2}+\frac{\gamma}{2}
\overset{(d)}{\leq} H_{q^t}^{-1}(s)-\frac{\gamma}{2}
\end{align*}
where $(a)$ follows by Lemma~\ref{lem:affball}, $(b)$ follows by~\eqref{eq:chain}, $(c)$ follows since by the definition of $\zeta$ we have $(t+1)\zeta\leq \gamma/2$, and $(d)$ follows from the definition of $\gamma$. This shows the proof of~\eqref{eq:toprove} in this case.

Recalling the standard ball-covering bound of~\eqref{eq:downball}, and the standard bound on the ball size using the entropy function (e.g., \cite[Lemma 4.7]{Rot06}),
\[
q^{tn H_{q^t}(R_t(\cC)/n)} \geq \cV_{q^t}(n,R_t(\cC)) \geq q^{t(n-k)}.
\]
If $R_t(\cC)/n > 1-q^{-t}$ then automatically $R_t(\cC)/n > H_{q^t}^{-1}(s)$. Otherwise, taking $\log_{q^t}$ of both sides and rearranging, we get
\[
    \frac{R_t(\cC)}{n} \geq H_{q^t}^{-1}\parenv*{\frac{n-k}{n}} = H_{q^t}^{-1}(s).
\]
Hence, with~\eqref{eq:toprove},
\[
R_t(\cC) - \frac{d_t(\cC)}{2} \geq \frac{\gamma n}{2} > 0.
\]
This proves that
\[
\delta_t(\cC) = \floor*{\frac{d_t(\cC)-1}{2}} \leq \frac{d_t(\cC)}{2} < R_t(\cC).
\]
\end{proof}

Armed with all the previous lemmas, we can now state and prove the main asymptotic theorem.

\begin{theorem}
\label{th:asymptotic}
For any real $0 < \alpha < 1$, there exists and integer $N_\alpha$, such that for every prime power $q$, every $n\geq N_\alpha$, and every $[n,k]_q$ code $\cC$ with rate $\frac{k}{n}\leq \alpha$, for all $1\leq t\leq \min\set{k,n-k}$,
\[
\delta_t(\cC) \leq R_t(\cC).
\]
\end{theorem}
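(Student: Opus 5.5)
The plan is to reduce Theorem~\ref{th:asymptotic} to Lemma~\ref{lem:uniform}, applied to only finitely many pairs $(q,t)$, and to handle all remaining pairs with Corollary~\ref{cor:log}. The small orders are already settled: $t=1$ is the classical packing versus covering bound and $t=2$ is Theorem~\ref{th:t2}, so we assume $t\geq 3$. Set $\beta\eqdef 1-\alpha>0$. The rate condition $k/n\leq\alpha$ then gives $n-k\geq \beta n$.

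\textbf{Large $q^t$.} Corollary~\ref{cor:log} gives the claim whenever $t(n-k)\log_2 q\geq 3n$. This holds as soon as $t\log_2 q\geq 3/\beta$, because then
\[
t(n-k)\log_2 q\geq \beta n\, t\log_2 q\geq 3n.
\]
Hence all pairs $(q,t)$ with $q^t\geq 2^{3/\beta}$ are covered for every length $n$. No length threshold is needed here, and the bound does not depend on $q$ or $t$ individually.

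\textbf{Remaining pairs.} These are the pairs with $t\geq 3$, $q$ a prime power, and $q^t<2^{3/\beta}$. There are only finitely many of them, since $q\geq 2$ forces $t<3/\beta$, and $t\geq 3$ forces $q<2^{1/\beta}$. For each such pair, Lemma~\ref{lem:uniform} with this $\beta$ supplies a threshold $N(q,t,\beta)$. Beyond it, every $[n,k]_q$ code with $k/n\leq 1-\beta=\alpha$ and $t\leq\min\{k,n-k\}$ satisfies $\delta_t(\cC)\leq R_t(\cC)$. Define
\[
N_\alpha\eqdef \max\Bigl\{1,\max\bigl\{N(q,t,1-\alpha): t\geq 3,\ q^t<2^{3/(1-\alpha)}\bigr\}\Bigr\},
\]
which is a maximum over a finite set and therefore a well-defined integer depending only on $\alpha$.

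\textbf{Combining.} Take any $n\geq N_\alpha$, any prime power $q$, any code of rate at most $\alpha$, and any $t\geq 3$. Either $q^t\geq 2^{3/\beta}$, and Corollary~\ref{cor:log} applies, or $(q,t)$ lies in the finite list and Lemma~\ref{lem:uniform} applies since $n\geq N(q,t,\beta)$.

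The main obstacle is conceptual: Lemma~\ref{lem:uniform} alone cannot give a threshold uniform in $q$ and $t$, since $\gamma$, $\zeta$ and $\Delta$ may degenerate as $q^t$ grows. The point is that the large-$q^t$ regime never needs Lemma~\ref{lem:uniform}, because Corollary~\ref{cor:log} handles it at every length. The only things to check are the inequality $t\log_2 q\geq 3/\beta$ under the rate constraint and the finiteness of the exceptional set of pairs.
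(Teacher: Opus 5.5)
Your proposal is correct and follows the paper's proof. The paper also settles $t=1,2$ separately and then applies Corollary~\ref{cor:log} to every pair with $t\beta\log_2 q\geq 3$, which is your condition $q^t\geq 2^{3/\beta}$. It likewise takes $N_\alpha$ to be the maximum of the thresholds $N(q,t,\beta)$ from Lemma~\ref{lem:uniform} over the finite remaining set $\cP_\beta$, whose finiteness you justify a bit more explicitly than the paper does.
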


\begin{proof}
Write $\beta = 1-\alpha \in (0,1)$. For $t=1$ the claim is classic, and for $t=2$ we have Theorem~\ref{th:t2}. By Corollary~\ref{cor:log}, for $t\geq 3$ all pairs $(q,t)$ are solved when $t\beta \log_2 q \geq 3$. The remaining unresolved pairs form the set
\[
\cP_\beta \eqdef \set*{ (q,t) : \text{$q$ is a prime power}, t\geq 3, t\beta \log_2 q < 3}.
\]
This set if obviously finite. For each pair $(q,t)\in\cP_\beta$ apply Lemma~\ref{lem:uniform}. Let
\[
N_\alpha = \max\parenv*{\set{1} \cup \set*{N(q,t,\beta) : (q,t)\in \cP_\beta}}.
\]
This choice guarantees all pairs in $\cP_\beta$ are handled when $n\geq N_\alpha$.
\end{proof}

\section{General Bounds}
\label{sec:genbound}

In this last section we bring two extensions of the method used for proving the conjecture holds when $t=2$ from Section~\ref{sec:t2}.

Let us fix an $\F_q$-linear isomorphism
$\psi:\F_q^t\to\F_{q^t}$. Thus, for each $\alpha\in\F_{q^t}$ we denote by
$M_\alpha\in\F_q^{t\times t}$ the matrix corresponding to
multiplication by $\alpha$, i.e., for all $x\in\F_q^t$,
\[ \psi(M_\alpha x)=\alpha \psi(x).\]

Consider an $[n,k]_q$ code $\cC$ and let $G\in\F_q^{k\times n}$ be a
generator matrix for $\cC$, i.e., $\cC=\rowsp G$. The point multiset
of $G$ is defined as the multiset of projective points appearing as
columns of $G$, namely,
\[
P_G \eqdef \mset{ \ang{G_{|j}}\in\PG(k-1,q) : G_{|j}\neq 0, 1\leq j\leq n}.
\]

\begin{theorem}
  \label{th:RtLambda}
  Let $\cC$ be an $[n,k]_q$ code, with generator
  matrix $G\in\F_q^{k\times n}$. Let $1\leq
  t\leq k$. Then
  \[ R_t(\cC)\geq \max_{\substack{U\in\F_q^{t\times k}\\ \rank U=t}}
  \sum_{\substack{\ang{v}\in\PG(k-1,q) \\ Uv\neq 0}}
  \parenv*{\#_{\ang{v}}(P_G)-\ceil*{\frac{\#_{\ang{v}}(P_G)}{q^t}}},\]
  and the RHS does not depend on the choice of generator matrix $G$.
\end{theorem}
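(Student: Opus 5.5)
Fix $U\in\F_q^{t\times k}$ of rank $t$. We will exhibit a single word $V\in\F_q^{t\times n}$ whose distance to every $C\in\cC^t$ is at least the sum on the RHS. Every $C\in\cC^t$ has the form $C=AG$ with $A\in\F_q^{t\times k}$, so $C_{|j}=AG_{|j}$. For each projective point $\ang{v}$ with $Uv\neq 0$, list the columns $j$ with $\ang{G_{|j}}=\ang{v}$. Write $G_{|j}=\lambda_j v$ with $\lambda_j\in\F_q^*$ and let $m_v=\#_{\ang{v}}(P_G)$. Set
\[
V_{|j}=\lambda_j\, U v\, \psi^{-1}(\beta_j)\text{-type twist}.
\]
Concretely, choose $V_{|j}=\lambda_j M_{\beta_j} Uv$, where the $\beta_j\in\F_{q^t}$ are assigned so that each element of $\F_{q^t}$ is used at most $\ceil{m_v/q^t}$ times among the $m_v$ columns in this class. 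On all other coordinates set $V_{|j}=0$.

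Next we bound the agreements. Fix $A$ and a class $\ang{v}$ with $Uv\ne 0$. Coordinate $j$ of the class agrees with $C$ exactly when $\lambda_j A v=\lambda_j M_{\beta_j}Uv$, that is, when $Av=M_{\beta_j}Uv$. Since $Uv\neq 0$, the map $\beta\mapsto M_\beta Uv$ is injective: $\psi(M_\beta Uv)=\beta\psi(Uv)$ and $\psi(Uv)\neq 0$. Hence at most one value $\beta$ satisfies $Av=M_\beta Uv$, and so at most $\ceil{m_v/q^t}$ coordinates in the class agree. The class therefore contributes at least $m_v-\ceil{m_v/q^t}$ to $d(V,C)$. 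Summing over the disjoint classes gives $d(V,C)\geq$ the sum, and this holds for every $C$. So $R_t(\cC)$ is at least the sum, and maximizing over $U$ gives the inequality.

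Independence of the generator matrix is the one point that needs care. Any other generator matrix is $G'=TG$ with $T\in\F_q^{k\times k}$ invertible, and $\ang{G'_{|j}}=\ang{TG_{|j}}$. Thus $P_{G'}$ is the image of $P_G$ under the projectivity induced by $T$, with the same multiplicities: $\#_{\ang{Tv}}(P_{G'})=\#_{\ang{v}}(P_G)$. Replacing $U$ by $UT^{-1}$, which is still rank $t$, gives $UT^{-1}(Tv)=Uv$. So the set of points with nonzero image and their multiplicities match term by term, the maximum over $U$ is unchanged, and the RHS is invariant.

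I expect no serious obstacle. The only delicate point is the choice of $V$: the columns must depend only on the class and a label $\beta_j$, and they must be scaled by $\lambda_j$ so that the agreement condition becomes independent of $j$ within a class. Once that is arranged, injectivity of $\beta\mapsto M_\beta Uv$ finishes the argument.
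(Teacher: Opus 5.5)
Your proposal is correct and follows essentially the same route as the paper: your word $V_{|j}=\lambda_j M_{\beta_j}Uv$ is exactly the paper's $X_{|j}=M_\alpha C_{|j}$ with $C=UG$, where your balanced labeling plays the role of the partition into the sets $A_{\ang{v},\alpha}$, and your agreement count via injectivity of $\beta\mapsto M_\beta Uv$ matches the paper's choice $\beta=\psi(\gamma U'v)/\psi(\gamma Uv)$. Your invariance argument, reparametrizing $U\mapsto UT^{-1}$, makes explicit what the paper only summarizes as the projectivity permuting the points of $\PG(k-1,q)$.
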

\begin{proof}
  Fix some generator matrix $G\in\F_q^{k\times n}$ for $\cC$. All the
  codewords of $\cC^t$ are given by $UG$, where $U\in\F_q^{t\times
    k}$, and distinct choices of $U$ lead to distinct codewords. Fix
  some $U\in\F_q^{t\times k}$ such that $\rank U = t$, and let
  $C=UG\in\cC^t$. For any $\ang{v}\in P_G$ such that $Uv\neq 0$,
  partition the $\#_{\ang{v}}(P_G)$ coordinates of $G$ containing a
  column in the equivalence class $\ang{v}$, into $q^t$ sets,
  $A_{\ang{v},\alpha}$, for all $\alpha\in\F_{q^t}$, whose sizes differ by
  at most $1$. Thus, for all $\alpha\in\F_{q^t}$,
  \[ \abs*{A_{\ang{v},\alpha}} \leq \ceil*{\frac{\#_{\ang{v}}(P_G)}{q^t}}.\]
  Define $X\in\F_q^{t\times n}$ such that
  \[ X_{|j} = \begin{cases}
    M_\alpha C_{|j}, & j\in A_{\ang{v},\alpha}, \alpha\in\F_{q^t} \\
    0, & \text{otherwise.}
  \end{cases}\]

  Consider an arbitrary codeword in $\cC^t$, $Y=U'G\in\cC^t$, where
  $U'\in\F_q^{t\times k}$. Write $Z\eqdef C-Y=(U-U')G$. For a
  projective point $\ang{v}\in P_G$ such that $Uv\neq 0$, assume $j$
  is a coordinate for which $G_{j}= \gamma v$,
  $\gamma\in\F_q\setminus\set{0}$. Then
  \begin{align*}
    C_{|j}&=UG_{|j} = \gamma U v \neq 0, \\
    Y_{|j}&=U'G_{|j} = \gamma U' v.
  \end{align*}
  Set $\beta = \psi(\gamma U' v)/ \psi(\gamma U v) \in \F_{q^t}$. Then
  \[
  (X-Y)_{|j} = \begin{cases}
    0, & j\in A_{\ang{v},\beta}, \\
    \gamma M_\alpha U v - \gamma U' v\neq 0 & j\in A_{\ang{v},\alpha}, \alpha\neq \beta.
  \end{cases}
  \]
  Therefore, the equivalence class $\ang{G_{|j}}$ contributes at most
  $\ceil{\#_{\ang{G_{|j}}}(P_G)/q^t}$ zero columns in $X-Y$. Thus,
  \[
  \wt(X-Y) \geq \sum_{\substack{\ang{v}\in \PG(k-1,q) \\ Uv\neq 0}} \parenv*{\#_{\ang{v}}(P_G)-\ceil*{\frac{\#_{\ang{v}}(P_G)}{q^t}}}.
  \]
  Since this holds for any $Y\in\cC^t$, the distance of $X$ to the
  nearest codeword in $\cC^t$ is at least this expression. Maximizing
  over all possible $U$ proves the claim.

  Regarding the choice of $G$, if we replace $G$ with another
  generator $AG$, where $A\in\F_q^{k\times k}$ is invertible, we note
  that this simply permutes the equivalence classes that are the
  points of $\PG(k-1,q)$, and the bound remains the same.
\end{proof}

\begin{corollary}
  Let $\cC$ be an $[n,k]_q$ code. Then for every $1\leq t\leq k$,
  \[ R_t(\cC) \geq d_t(\cC)-\min_{\substack{U\in\F_q^{t\times k}\\ \rank U=t}}\sum_{\substack{\ang{v}\in \PG(k-1,q) \\ Uv\neq 0}} \ceil*{\frac{\#_{\ang{v}}(P_G)}{q^t}}.
  \]
\end{corollary}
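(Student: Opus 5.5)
The plan is to derive the corollary directly from Theorem~\ref{th:RtLambda} by choosing the optimal $U$ in a way that connects the sum to $d_t(\cC)$. The key observation is that for $U\in\F_q^{t\times k}$ of rank $t$, the codeword matrix $C=UG\in\cC^t$ has rank $t$. The support of $C$ consists of exactly those coordinates $j$ with $UG_{|j}\neq 0$. Since $UG_{|j}\neq 0$ depends only on the projective point $\ang{G_{|j}}$, grouping coordinates by projective point gives
\[
\wt(UG)=\sum_{\substack{\ang{v}\in\PG(k-1,q)\\ Uv\neq 0}} \#_{\ang{v}}(P_G).
\]
Conversely, every rank-$t$ element of $\cC^t$ is of the form $UG$ with $\rank U=t$, because $G$ has full row rank. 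Hence $d_t(\cC)=\min_U \wt(UG)$ over rank-$t$ matrices $U$.

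The bound then follows in two steps. First, write the summand of Theorem~\ref{th:RtLambda} as $\#_{\ang{v}}(P_G)$ minus the ceiling term, so that for each fixed $U$ the RHS of that theorem equals $\wt(UG)-S(U)$, where
\[
S(U)\eqdef\sum_{\substack{\ang{v}\in\PG(k-1,q)\\ Uv\neq 0}}\ceil*{\#_{\ang{v}}(P_G)/q^t}.
\]
Second, since $R_t(\cC)\geq \wt(UG)-S(U)$ for every rank-$t$ $U$ and $\wt(UG)\geq d_t(\cC)$, we get $R_t(\cC)\geq d_t(\cC)-S(U)$ for every such $U$. Taking the $U$ that minimizes $S(U)$ gives the claimed inequality.

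There is no real obstacle. The only care needed is to make sure the decoupling of the two optimizations is valid. We cannot in general pick one $U$ attaining both $\min \wt(UG)$ and $\min S(U)$, but this is unnecessary, since we lower bound $\wt(UG)$ by $d_t(\cC)$ uniformly in $U$. We should also confirm that the minimum in the statement is implicitly taken with respect to a fixed generator matrix $G$. By the invariance noted at the end of Theorem~\ref{th:RtLambda}, replacing $G$ by $AG$ together with $U$ by $UA^{-1}$ shows the quantity is independent of that choice.
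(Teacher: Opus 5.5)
Your proposal is correct and follows essentially the paper's argument. The paper also takes the bound of Theorem~\ref{th:RtLambda} for a fixed rank-$t$ matrix $U$, recognizes $\sum_{Uv\neq 0}\#_{\ang{v}}(P_G)$ as $\wt(UG)\geq d_t(\cC)$, and then minimizes the remaining ceiling sum over $U$; your remark on independence from the choice of $G$ is a correct but unnecessary extra.
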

\begin{proof}
Proceed as in the proof of Theorem~\ref{th:RtLambda}, but at the end, recall that for all
$U\in\F_q^{t\times k}$ for which $\rank U=t$, we have $\wt(UG)\geq d_t(\cC)$.
\end{proof}

\begin{corollary}
Let $\cC$ be an $[n,k]_q$ code. If $\#_{\ang{v}}(P_G)$ is a multiple of $q^t$ for all $v$, then
\[
R_t(\cC)\geq \parenv*{1-\frac{1}{q^t}}d_t(\cC).
\]
\end{corollary}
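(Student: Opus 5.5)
This follows from the preceding corollary by evaluating its subtracted term under the divisibility hypothesis. We take any $U\in\F_q^{t\times k}$ with $\rank U=t$. When $\#_{\ang{v}}(P_G)$ is a multiple of $q^t$, the ceiling is exact:
\[
\ceil*{\frac{\#_{\ang{v}}(P_G)}{q^t}}=\frac{\#_{\ang{v}}(P_G)}{q^t}.
\]
Hence
\[
\sum_{\substack{\ang{v}\in \PG(k-1,q)\\ Uv\neq 0}}\ceil*{\frac{\#_{\ang{v}}(P_G)}{q^t}}=\frac{1}{q^t}\sum_{\substack{\ang{v}\in \PG(k-1,q)\\ Uv\neq 0}}\#_{\ang{v}}(P_G)=\frac{\wt(UG)}{q^t}.
\]
The last equality holds because column $j$ of $UG$ is nonzero exactly when $G_{|j}\neq 0$ and $UG_{|j}\neq0$. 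That is, it is nonzero exactly when $\ang{G_{|j}}$ is a point $\ang{v}$ with $Uv\neq 0$, and each such point contributes its multiplicity.

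It is cleaner to work with Theorem~\ref{th:RtLambda} directly than with the minimized form of the corollary. Under the hypothesis, each term of its right-hand side is $\#_{\ang{v}}(P_G)(1-q^{-t})$. Summing over the points with $Uv\neq0$ therefore gives
\[
R_t(\cC)\geq \max_{\rank U=t}\parenv*{1-\frac{1}{q^t}}\wt(UG).
\]
As $U$ ranges over rank-$t$ matrices, $UG$ ranges over all rank-$t$ elements of $\cC^t$, since $G$ has full row rank. So this maximum is at least the weight of the rank-$t$ codeword matrix attaining $d_t(\cC)$, and indeed at least any such weight. This yields $R_t(\cC)\geq (1-q^{-t})d_t(\cC)$.

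There is no real obstacle in this argument. The only care needed is the identity $\sum_{Uv\neq0}\#_{\ang{v}}(P_G)=\wt(UG)$, including its independence from the chosen representative $v$ of each projective point. It also matters to use the max form of Theorem~\ref{th:RtLambda}, or equivalently to note that $(1-q^{-t})\wt(UG)\geq(1-q^{-t})d_t(\cC)$, rather than combining a min of the subtracted term with $d_t$ for a different $U$.
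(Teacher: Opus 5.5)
Your proof is correct and is essentially the paper's intended argument, which the paper leaves unwritten as an immediate consequence of the preceding corollary: under the divisibility hypothesis the ceilings are exact and the sum collapses to $\wt(UG)/q^t$. Working from the max form of Theorem~\ref{th:RtLambda} even gives the slightly stronger bound $(1-q^{-t})\max_{\rank U=t}\wt(UG)$. Your caution about the min form is unnecessary, though. The preceding corollary is valid as stated because $\wt(UG)\geq d_t(\cC)$ for every rank-$t$ matrix $U$, and under the hypothesis its subtracted minimum equals $\min_U \wt(UG)/q^t = d_t(\cC)/q^t$, which gives the claim in one line.
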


A nicer form of result, similar in flavor to the previous theorem, uses the probabilistic method.

\begin{theorem}
  Let $\cC$ be an $[n,k]_q$ code, and let $1\leq t\leq k$. Then
  \[
  R_t(\cC) \geq \ceil*{\parenv*{1-\frac{1}{q^t}}d_t(\cC)-\sqrt{\frac{d_t(\cC)}{2}tk\ln q}}.
  \]
\end{theorem}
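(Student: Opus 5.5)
Fix $U\in\F_q^{t\times k}$ of rank $t$ with $\wt(UG)=d_t(\cC)$, and set $C=UG$ and $S=\supp(C)$, so $\abs{S}=d\eqdef d_t(\cC)$. As in Theorem~\ref{th:RtLambda}, the center $X$ is built from $C$ by scaling the columns in $S$. Here, however, the scaling is random instead of a balanced deterministic partition. For each $j\in S$ I choose $\alpha_j\in\F_{q^t}$ independently and uniformly at random and set $X_{|j}=M_{\alpha_j}C_{|j}$; I set $X_{|j}=0$ for $j\notin S$. The goal is to show that with positive probability every codeword $Y\in\cC^t$ satisfies
\[
\wt(X-Y)\geq \parenv*{1-\frac{1}{q^t}}d-\sqrt{\frac{d}{2}tk\ln q}.
\]
Since $R_t(\cC)$ is an integer, the ceiling then follows.

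First fix $Y=U'G$ with $U'\in\F_q^{t\times k}$. For $j\in S$, the computation in the proof of Theorem~\ref{th:RtLambda} shows that $(X-Y)_{|j}=0$ if and only if $\psi(M_{\alpha_j}C_{|j})=\psi(Y_{|j})$. Since $C_{|j}\neq 0$, this happens if and only if $\alpha_j=\psi(Y_{|j})/\psi(C_{|j})$, which is one specific element of $\F_{q^t}$. Hence each coordinate $j\in S$ is zero in $X-Y$ independently with probability exactly $q^{-t}$. Let $W_Y$ be the number of nonzero columns of $X-Y$ inside $S$. Then $\wt(X-Y)\geq W_Y$, and $W_Y\sim\Bin(d,1-q^{-t})$. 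Hoeffding's inequality gives
\[
\Pr\bigl[W_Y<(1-q^{-t})d-\lambda\bigr]\leq e^{-2\lambda^2/d}.
\]

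Next I take a union bound over all $Y\in\cC^t$. There are exactly $q^{tk}$ of them, one for each $U'$. Choosing $\lambda=\sqrt{\frac{d}{2}tk\ln q}$ makes each failure probability at most $q^{-tk}$, so the union bound only gives total failure probability at most $1$. This is the main obstacle, since a positive probability of success is needed. I would fix it in one of two ways. The first is to note that $Y=C$ fails deterministically only if $W_C$ is small, and to handle that term separately. The cleaner second way is to sharpen the count: use the strict inequality in Hoeffding, or observe that $Y=0$ gives $W_0=d$ deterministically, so $Y=0$ contributes failure probability $0$. With the zero codeword removed, at most $q^{tk}-1$ codewords remain, each failing with probability at most $q^{-tk}$, so the total failure probability is at most $1-q^{-tk}<1$.

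Therefore some choice of the $\alpha_j$ yields an $X$ with $\min_{Y\in\cC^t}d(X,Y)\geq (1-q^{-t})d-\sqrt{\frac{d}{2}tk\ln q}$. By the definition of $R_t(\cC)$ and its integrality, this gives the stated bound with the ceiling.
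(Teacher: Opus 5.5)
Your random center is sound. Since $\psi(M_{\alpha}C_{|j})=\alpha\,\psi(C_{|j})$ with $\psi(C_{|j})\neq 0$, each column $X_{|j}$, $j\in S$, is uniform on $\F_q^t$. So your $X$ has exactly the distribution of the paper's random matrix, and your per-codeword Hoeffding estimate is correct. The gap is in the step meant to push the union bound strictly below $1$.

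The argument you actually carry out, that $Y=0$ gives $W_0=d$ deterministically, is false. The element $\alpha_j$ is uniform on all of $\F_{q^t}$ and $M_0$ is the zero matrix, so $X_{|j}=0$ with probability $q^{-t}$. Your own criterion $\alpha_j=\psi(Y_{|j})/\psi(C_{|j})$ gives $\alpha_j=0$ for $Y=0$. In fact every $W_Y$, including $W_0$ and $W_C$ (where the bad value is $\alpha_j=1$), is $\Bin(d,1-q^{-t})$. So no codeword can be discarded, and handling $Y=C$ separately gains nothing either. Restricting $\alpha_j$ to $\F_{q^t}\setminus\set{0}$ would force $W_0=d$. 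But then $\Pr[(X-Y)_{|j}=0]=1/(q^t-1)$ whenever $Y_{|j}\neq 0$, and the constant $1-q^{-t}$ is lost. The remaining option you mention, strictness of Hoeffding's inequality, does hold for $\lambda>0$. However, it is not part of the standard statement, and you give no argument for it.

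The missing idea is to use integrality at the threshold. Put $\mu=(1-q^{-t})d$ and $r=\ceil{\mu-\lambda}-1$. Failure for $Y$ means $W_Y\leq r$. Since $r<\mu-\lambda$, the deviation $\mu-r$ is strictly larger than $\lambda$, so Hoeffding gives $\Pr(W_Y\leq r)\leq \exp(-2(\mu-r)^2/d)<\exp(-2\lambda^2/d)=q^{-tk}$. The union bound over the $q^{tk}$ codewords is then strictly below $1$, so some $X$ is at distance at least $r+1=\ceil{\mu-\lambda}$ from $\cC^t$.

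The paper avoids the issue by arguing in the other direction. Every $X$ lies within distance $R_t(\cC)$ of some codeword. Hence $1\leq\sum_{Y\in\cC^t}\Pr(Z_Y\leq R_t(\cC))\leq q^{kt}\exp(-2(\E[Z_Y]-R_t(\cC))^2/d_t(\cC))$ whenever $R_t(\cC)<\E[Z_Y]$. Solving this for $R_t(\cC)$ needs only non-strict inequalities. Note also that your choice $S=\supp(UG)$ and the scaling by $M_{\alpha_j}$ are never actually used; the paper simply takes an arbitrary set of $d_t(\cC)$ coordinates.
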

\begin{proof}
  Choose an arbitrary set $S\subseteq\set{1,\dots,n}$ of coordinates
  of size $\abs{S}=d_t(\cC)$. Construct a random matrix
  $X\in\F_q^{t\times n}$ by choosing each column $X_{|j}$
  independently and uniformly from $\F_q^t$, and then setting
  $X_{|j}=0$ for all $j\notin S$.

  For any $j\in S$, define the indicator random variable $I_j$ to be
  $1$ if and only if $X_{|j}\neq Y_{|j}$. Set $Z_Y \eqdef \sum_{j\in
    S} I_j$. Thus, $Z_Y$ measures the contributions of the coordinates
  in $S$ to the distance between $X$ and $Y$, and so $\wt(X-Y)\geq
  Z_Y$. Clearly, $Z_Y$ has a binomial distribution, $Z_Y \sim
  \Bin(d_t(\cC),1-q^{-t})$, since it is the sum of i.i.d. Bernoulli
  random variables with probability $1-q^{-t}$. It follows that
  \[ \E [Z_Y] = \parenv*{1-\frac{1}{q^t}} d_t(\cC).\]
  By definition of the generalized covering radius, every instance of
  $X$ is within distance $R_t(\cC)$ from at least one $Y\in\cC^t$. Hence,
  by the union bound
  \begin{align}
    1 &= \Pr\parenv*{\bigcup_{Y\in\cC^t} \set*{\wt(X-Y)\leq R_t(\cC)}}
    \leq \sum_{Y\in\cC^t} \Pr( Z_Y\leq R_t(\cC)). \label{eq:ZY}
  \end{align}
  If $R_t(\cC)\geq \E[Z_Y]=(1-q^{-t})d_t(\cC)$, then we already have
  the claim proved. Otherwise, by Hoeffding's inequality, using $Z_Y$
  as a sum of i.i.d Bernoulli random variables,
  \begin{align*}
    \Pr(Z_Y\leq R_t(\cC)) & = \Pr( Z_Y-\E[Z_Y] \leq -(\E[Z_Y]-R_t(\cC))) \\
    & \leq
    \exp\parenv*{-\frac{2(\E[Z_Y]-R_t(\cC))^2}{d_t(\cC)}}.
  \end{align*}
  With~\eqref{eq:ZY} we have
  \[ 1 \leq q^{kt}\exp\parenv*{-\frac{2(\E[Z_Y]-R_t(\cC))^2}{d_t(\cC)}}.\]
  Using $\E[Z_Y]=(1-q^{-t})d_t(\cC)$ and rearranging, gives us the
  desired claim.
\end{proof}

\section*{Acknowledgments}
This research was support in part by the Natural Sciences and Engineering Research Council of Canada (NSERC) under grant no.~RGPIN-2026-06805.

\bibliographystyle{elsarticle-num}
\bibliography{allbib}

\end{document}